\begin{document}

\title{Certified Compilation based on G\"{o}del Numbers}

\author{Guilherme de Oliveira Silva}
  \email{gsilva@cadence.com}
  \orcid{0009-0007-6742-1829}
\affiliation{
%  \department{Computer Science}              %% \department is recommended
  \institution{Cadence}
  %% \institution is required
  \city{Belo Horizonte}
%  \state{Minas Gerais}
  \country{Brazil}
}

\author{Fernando Magno Quint\~{a}o Pereira}
  \email{fernando@dcc.ufmg.br}
  \orcid{0000-0002-0375-1657}
\affiliation{
%  \department{Computer Science}              %% \department is recommended
  \institution{UFMG}             %% \institution is required
%  \city{Belo Horizonte}
%  \state{Minas Gerais}
  \country{Brazil}
}
\email{fernando@dcc.ufmg.br}          %% \email is recommended

\begin{abstract}
In his 1984 Turing Award lecture, Ken Thompson showed that a compiler could be maliciously altered to insert backdoors into programs it compiles and perpetuate this behavior by modifying any compiler it subsequently builds. Thompson's hack has been reproduced in real-world systems for demonstration purposes.
Several countermeasures have been proposed to defend against Thompson-style backdoors, including the well-known {\it Diverse Double-Compiling} (DDC) technique, as well as methods like translation validation and CompCert-style compilation. However, these approaches ultimately circle back to the fundamental question: ``{\it How can we trust the compiler used to compile the tools we rely on?}''
In this paper, we introduce a novel approach to generating certificates to guarantee that a binary image faithfully represents the source code. These certificates ensure that the binary contains all and only the statements from the source code, preserves their order, and maintains equivalent def-use dependencies. The certificate is represented as an integer derivable from both the source code and the binary using a concise set of derivation rules, each applied in constant time.
To demonstrate the practicality of our method, we present \textsc{Charon}, a compiler designed to handle a subset of C expressive enough to compile \textsc{FaCT}, the Flexible and Constant Time cryptographic programming language.
\end{abstract}

%% 2012 ACM Computing Classification System (CSS) concepts
%% Generate at 'http://dl.acm.org/ccs/ccs.cfm'.
%% 2012 ACM Computing Classification System (CSS) concepts
%% Generate at 'http://dl.acm.org/ccs/ccs.cfm'.
%\begin{CCSXML}
% <ccs2012>
% <concept>
%/<concept_id>10011007.10011006.10011041.10011048</concept_id>
% <concept_desc>Software and its engineering~Runtime environments</concept_desc>
% <concept_significance>500</concept_significance>
% </concept>
% <concept>
% <concept_id>10011007.10011006.10011041</concept_id>
% <concept_desc>Software and its engineering~Compilers</concept_desc>
% <concept_significance>500</concept_significance>
% </concept>
% </ccs2012>
%\end{CCSXML}

%\ccsdesc[500]{Software and its engineering~Runtime environments}
%\ccsdesc[500]{Software and its engineering~Compilers}

\keywords{Compilation, Certification, G\"{o}del Numbering}

% --- Neutralize ACM metadata ---
\settopmatter{printacmref=false} % removes citation info below abstract
\renewcommand\footnotetextcopyrightpermission[1]{} % removes footnote with conference info
\pagestyle{plain} % removes running headers

\maketitle

\section{Introduction}
\label{sec_introduction}

% The Thompson notion of trusting trust
%
In his Turing Award lecture, ``{\it Reflections on Trusting Trust''}~\cite{Thompson84}, Ken Thompson introduced an influential concept in cybersecurity: he described an attack in which a compiler is modified to insert a backdoor into the UNIX login command. The brilliance of this attack lies in its ability to self-propagate: the compromised compiler not only injects the backdoor but also recognizes and reinserts it into new compiler versions, even if the malicious code is removed from the source.
The concrete implementation of the attach is relatively simple~\cite{Cox23}, something noticeable, given how hard is to prevent it.
Thompson’s work highlights two critical truths: software tools can be compromised in ways that evade traditional inspection, and trust must extend beyond source code to include the entire toolchain. By exposing this hidden vulnerability, Thompson laid the groundwork for understanding and addressing {\it software supply chain attacks}~\cite{Cesarano24, Ladisa23, Ohm20, Okafor24}, a concern that remains relevant today.

% How we can mitigate Thompson's hack
\paragraph{The Challenge of Building Trust}
In Thompson's word, instead of trusting that a program is free of Trojan horses, ``{\it Perhaps it is more important to trust the people who wrote the software}''~\cite{Thompson84}.
Nevertheless, various techniques have been developed to enhance trust in software and mitigate the risks of Thompson-style backdoors. One prominent approach specifically addressing this issue is the Diverse Double-Compiling technique, proposed by David A. Wheeler in his PhD dissertation~\cite{Wheeler05,Wheeler10}.
As Section~\ref{sub_ddc} further explains, this method compares binaries produced by different compilers to identify malicious behavior in a potentially compromised compiler.

Going beyond Wheeler's certification methodology, programming language techniques provide broader mechanisms to build trust in software. Translation validation~\cite{Pnueli98, Necula00} verifies that the output of a compiler faithfully preserves the semantics of the input program by validating each individual compilation rather than the entire compiler. Proof-carrying code (PCC)~\cite{Necula97} allows a program consumer to ensure that code provided by an untrusted producer satisfies specific safety properties without needing to trust the producer.
Ultimately, compilers can be fully verified. For instance, \textsc{CompCert}~\cite{Leroy09} is a compiler whose correctness is formally guaranteed via the Coq proof assistant.
Together, these techniques strengthen the foundations of trust in the software ecosystem.
And yet, the fundamental issue of ``{\it who we trust}'' still plagues them all.

% The problem with these mitigation techniques
The aforementioned techniques -- certified compilation (as in CompCert), Proof-Carrying Code (PCC), Translation Validation, and Diverse Double-Compiling (DDC) -- each have limitations in addressing the fundamental challenge raised by Thompson's ``Trusting Trust'' attack.
These limitations arise primarily because demonstrating that a binary faithfully implements the semantics of arbitrary source code requires solving the problem of semantic equivalence, which is undecidable in general due to its relationship to the Halting Problem~\cite{Church36,Turing36}. 

\begin{description}
\item [Proof-Carrying Code:] PCC is designed to demonstrate specific safety properties, such as the absence of out-of-bounds memory accesses (e.g., ensuring all array accesses $\mathtt{arr[i]}$ satisfy $0 \leq \mathtt{i} < \mathtt{size}$). While effective at verifying certain well-defined properties, PCC does not address the presence of malicious behavior, such as Trojan horses, in the binary.

\item [Translation Validation:] The validator verifies the correctness of individual compiler transformations, such as register allocation~\cite{Rideau10} or constant propagation~\cite{Pnueli99}, rather than the entire compiler. By focusing on specific transformations, translation validation avoids undecidability.
However, it is inherently limited to the scope of the program and transformations being validated. If validation fails, it may indicate a limitation in the validation technique rather than an actual error in the compiler.

\item [Verified Compilation:] Unlike PCC, a verified compiler such as \textsc{CompCert} does not produce standalone certificates that users can independently verify. Instead, \textsc{CompCert} provides strong semantic correctness guarantees for the programs it compiles, but it does not address the problem of trusting the compiler binary itself. If the binary of the verified compiler is compromised, all guarantees it provides are invalid.

\item [Diverse Double-Compiling:] As Section~\ref{sub_ddc} explains, DDC relies on the assumption that at least one trusted compiler exists. While this technique reduces the likelihood of an undetected backdoor, it cannot eliminate the possibility entirely. If both compilers used in the DDC process are compromised, then Thompson's backdoor may still persist.
\end{description}

Therefore, each of these techniques represents a step toward building trust in software, but none can completely eliminate the risks highlighted by Thompson's seminal work.

% The contributions of this paper: a numbering system.
\paragraph{The Contribution of This Work:}
This paper proposes a technique to certify that a target (low-level) program was produced as the result of translating a source (high-level) program.  
Said technique is based on the famous encoding Kurt Gödel used in the proof of his incompleteness theorems \cite{Godel31}.  
As explained in Section~\ref{sec_solution}, the certificate of faithful compilation is embedded within the structure of the high-level program and in the structure of the low-level program, rather than existing as a separate artifact, as is the case with proof-carrying code.
Figure~\ref{fig_certificationOverview} illustrates the proposed methodology.

\begin{figure}[ht]
\centering
\includegraphics[width=1\textwidth]{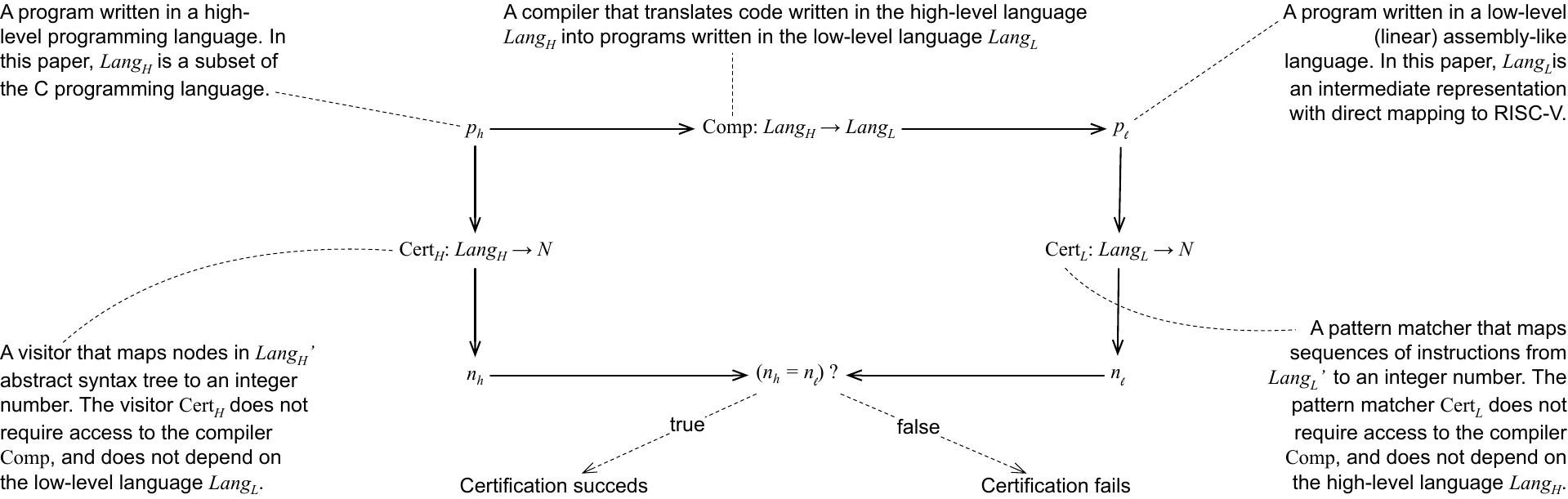}
\caption{Overview of the proposed certification methodology.}
\Description{Overview of the proposed certification methodology.}
\label{fig_certificationOverview}
\end{figure}

In Figure~\ref{fig_certificationOverview}, \(p_h\) is a program written in a high-level source language \(\mathit{Lang}_H\), and \(p_\ell\) is the program written in a low-level target language \(\mathit{Lang}_L\), produced by a compiler \(\mathit{Comp}\), i.e., \(p_\ell = \mathit{Comp}(p_h)\).
Figure~\ref{fig_certificationOverview} defines two mapping algorithms, \(H:\mathit{Lang}_H \mapsto \mathbb{N}\) and \(L:\mathit{Lang}_L \mapsto \mathbb{N}\).
If $\mathit{Cert}_H(p_h) = \mathit{Cert}_L(\mathit{Comp}(p_h))$, then $\mathit{Comp}$ has not modified the compilation contract, meaning that $\mathit{Comp}$ has generated only the instructions necessary to implement in $p_\ell$ the semantics of $p_h$.
Example~\ref{ex_exampleCertificate} will clarify these ideas.

\begin{example}
\label{ex_exampleCertificate}
Figure~\ref{fig_exampleCertificate} shows how we produce a certificate for a piece of code that is often considered the shortest valid C program.
This example illustrates some basic principles behind our approach.
First, a certificate is a product of powers of prime numbers.
The base of such exponents, e.g., numbers such as 2, 3 and 5 in Figure~\ref{fig_exampleCertificate} mark the position where each syntactic construct appears (first, second, third, etc) in the program's AST, or in the low-level intermediate representation (IR) that we adopt in this paper.
The exponents are associated with particular programming constructs in the high-level language, or with particular sequences of instructions in the low-level language.
For instance, a $\mathtt{return}$ statement in the high-level language is associated with the exponent 19.
The code that corresponds to a $\mathtt{return}$ statement in the low-level language uses two opcodes: $\mathtt{MOV}$ and $\mathtt{JR}$.
The certificate of a program is the product of the certificate of all the constructs that make up that program.
This sequence is also associated with exponent 19.
In this example, this certificate is $2^2 \times 3^{19} \times 5^{113}$.
In this paper we adopt as the low-level language a linear intermediate representation (IR) that has one-to-one correspondents with RISC-V instructions, as Figure~\ref{fig_exampleCertificate} shows.
Thus, by certifying the intermediate representation, we also certify its corresponding RISC-V image.
\end{example}

\begin{figure}[ht]
\centering
\includegraphics[width=1\textwidth]{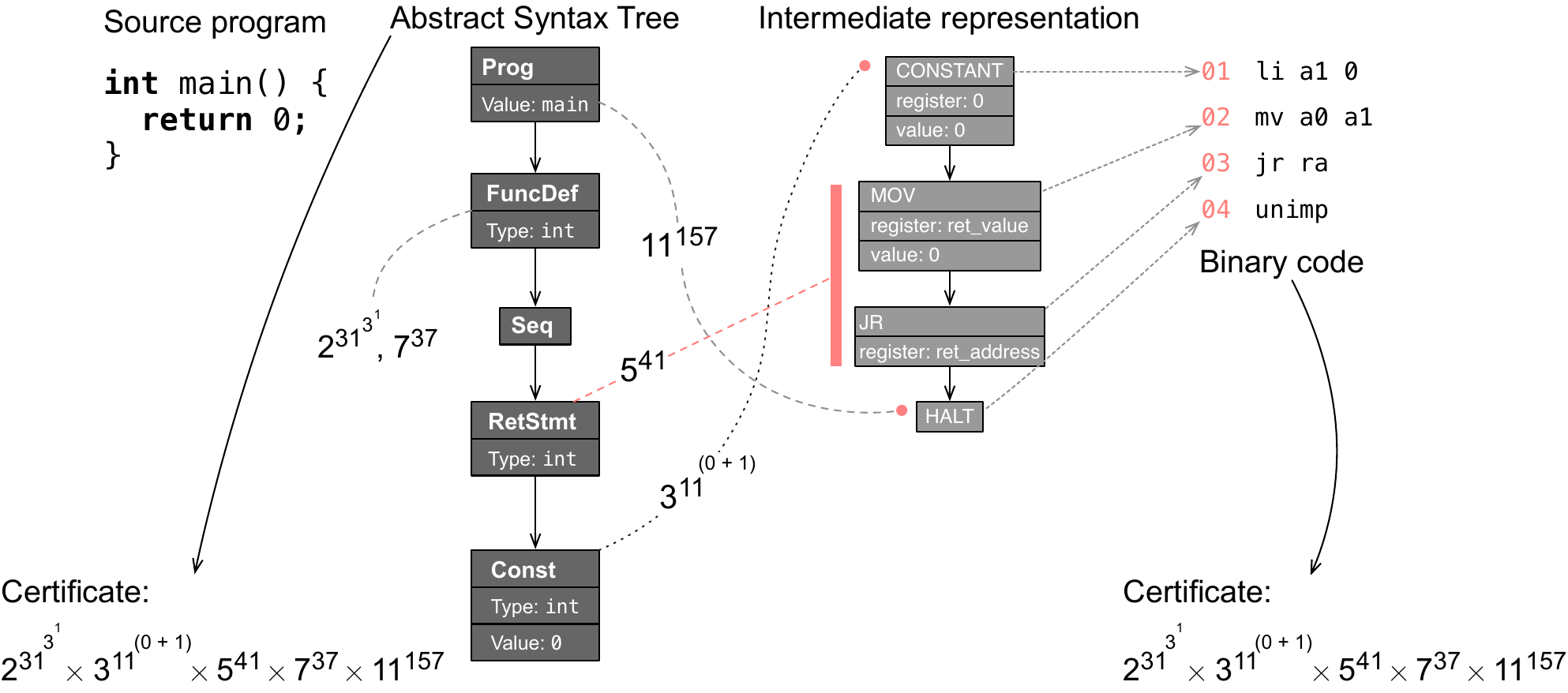}
\caption{How \textsc{Charon} generates certificates for a simple C program.}
\Description{How \textsc{Charon} generates certificates for a simple C program.}
\label{fig_exampleCertificate}
\end{figure}

Example~\ref{ex_exampleCertificate} illustrates how the proposed methodology works on a very simple program.
This example misses two capacities of this approach: the ability to certify high-level constructs that are translated to non-contiguous sequences of instructions (like loops and branches); and the ability to encode def-use relations, as we would have once the high-level program manipulates variables.
Section~\ref{sec_solution} will explain how we deal with such constructs.

% Summary of results.
\paragraph{Summary of Results}
We have implemented the ideas proposed in this paper in a compilation framework called \textsc{Charon}.
This compiler is capable of certifying a subset of C that is expressive enough to support \textsc{FaCT}~\cite{Cauligi19} programs.
\textsc{FaCT} is a restricted version of C designed to facilitate constant-time programming. It provides users with three control-flow constructs: \texttt{if-then-else}, \texttt{while}, and \texttt{return}, along with a type system that allows for the specification of public and classified data.
\texttt{Charon} supports all the control-flow constructs in \textsc{FaCT}, three data types (\texttt{int16}, \texttt{int32}, and \texttt{float}), and implicit type coercions.
Henceforth, we call this version of \textsc{FaCT} (the $\mathit{Lang}_H$ in Figure~\ref{fig_certificationOverview}), \textsc{CharonLang}.
Similarly, we name \textsc{CharonIR} our target low-level intermediate representation (the $\mathit{Lang}_L$ in Figure~\ref{fig_certificationOverview}).

Our certification methodology provides the guarantees shown in Figure~\ref{fig_certificationOverview}, along with the property of \emph{invertibility}.
Invertibility means that there exists a function $\mathit{Canon} : \mathbb{N} \mapsto \mathit{Lang}_H$ such that, if $\mathit{Cert}_H(P_H) = n$, then $\mathit{Canon}(n) = P_H'$, where $P_H'$ is the \emph{Canonical Representation} of program $P_H$.
As explained in Section~\ref{sub_canonical}, $P_H$ and $P_H'$ need not be syntactically identical, but they are guaranteed to be semantically equivalent.
This result mimic G\"{o}del's usage of the {\it Fundamental Theorem of Arithmetic}, which states that every integer greater than 1 can be expressed uniquely (up to the order of factors) as a product of prime numbers.
Building on this fact, Section~\ref{sub_correctness} shows that certificates are invertible at both the high-level and low-level representations of programs.

The absolute value of a certificate, e.g., measured as the number of bits necessary to represent it, grows exponentially with the size of programs.
This asymptotic behavior characterizes $\mathit{Cert}_H$ and $\mathit{Cert}_L$.
In the former case, because a certificate contains a multiplicative factor for each construct in the program's AST.
In the latter, because the certificate contains a multiplicative factor for each instruction in the program's low-level representation.
Nevertheless, as we show in Section~\ref{sec_eval}, the symbolic representation of a certificate, as a string that encodes a series of multiplications, is still linear on the size of the program, be it written in \textsc{CharonLang} or in \textsc{CharonIR}.

\section{Trusting Trust}
\label{sec_overview}

This section provides the background knowledge necessary to follow the ideas that will be developed in Section~\ref{sec_solution}.
To this end, Section~\ref{sub_thompson} starts explaining how Thompson's hack works.
Section~\ref{sub_godel}, in turn, discusses the principles behind G\"{o}del's Numbering System.

\subsection{The Thompson Hack}
\label{sub_thompson}

Ken Thompson's famous ``Trusting Trust'' hack demonstrates a self-replicating backdoor embedded into a compiler -- the ``Trojan Compiler''.
Thompson's exploit starts with a clean compiler, which does not contain any backdoors. This compiler is trusted by the user and produces backdoor-free binaries.
The attack leverages two backdoors to propagate malicious behavior, even when the source code appears clean. The process can be described in four stages, as illustrated in Figure~\ref{fig_thompsonHack}.

\begin{figure}[ht]
\centering
\includegraphics[width=0.9\textwidth]{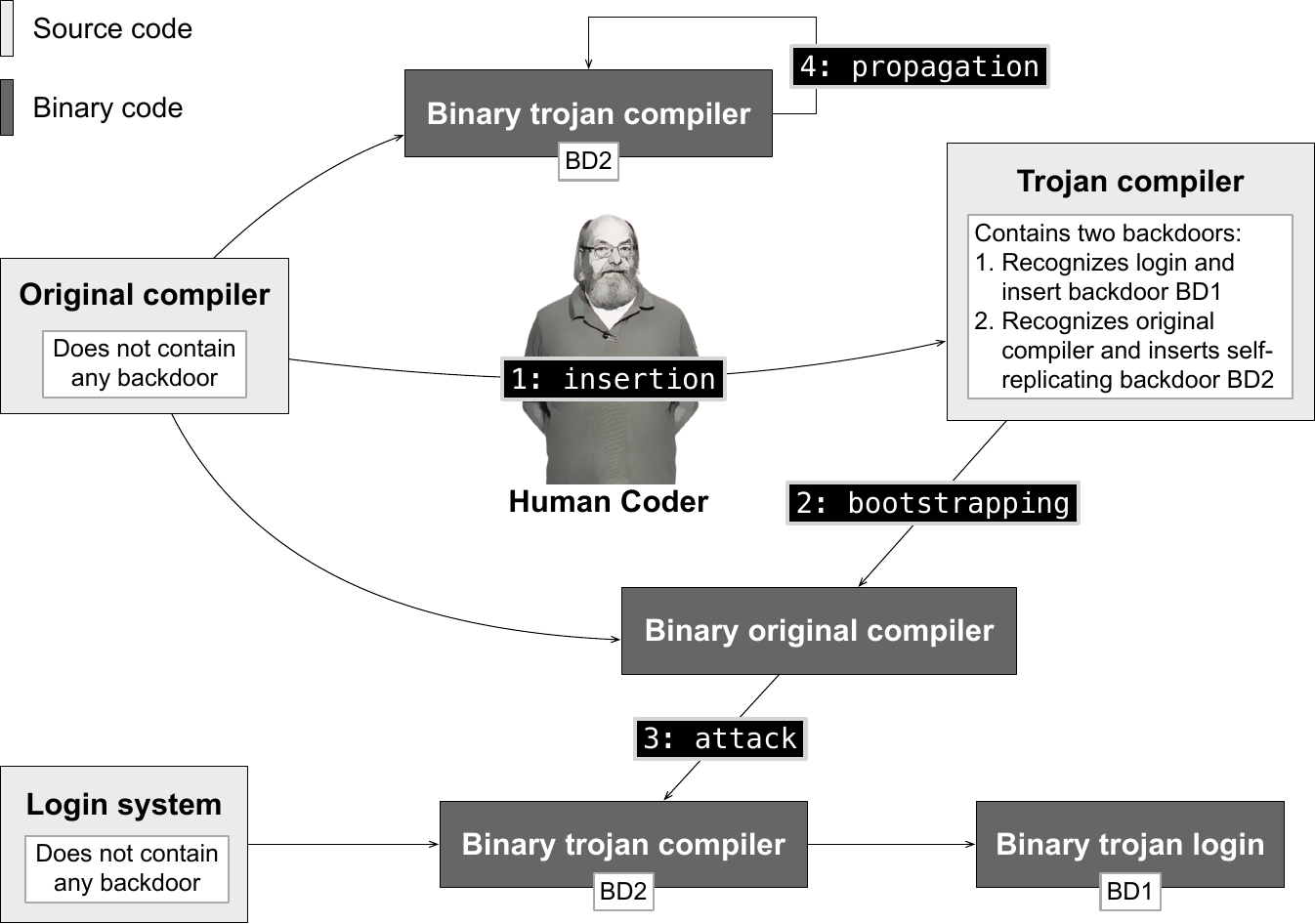}
\caption{A schematic view of Thompson's ``Trusting Trust'' attack.}
\Description{A schematic view of Thompson's ``Trusting Trust'' attack.}
\label{fig_thompsonHack}
\end{figure}

\begin{enumerate}
\item \textbf{Insertion:}
In Thompson's hack, a (malicious) user modifies the compiler source to produce a ``trojan compiler'', which contains two backdoors:
\begin{itemize}
\item \textbf{BD1:} This backdoor recognizes when the trojan compiler is compiling the \texttt{login} program and inserts a malicious backdoor (e.g., a hardcoded password) into the compiled binary.
\item \textbf{BD2:} This self-replicating backdoor recognizes when the trojan compiler is compiling its own source code and reinserts both BD1 and BD2 into the new compiler binary.
\end{itemize}

\item \textbf{Bootstrapping:}
Once the trojan compiler's source is compiled with the original compiler, it produces a binary version of the trojan compiler. This binary now contains both backdoors, BD1 and BD2.
At this point, the source code of the trojan compiler is no longer necessary, and can be removed.
Therefore, the source code of the compiler appears clean, as BD2 exists only at the binary level.

\item \textbf{Attack:}
When the binary trojan compiler compiles the \texttt{login} system, it inserts BD1 into the resulting binary, creating a \texttt{login} login system.

\item \textbf{Propagation:}
When the binary trojan compiler compiles a clean version of its own source code, it reinserts BD1 and BD2 into the new binary compiler. This ensures the trojan behavior propagates indefinitely, even if the source code is reviewed and found to be clean.
\end{enumerate}

\noindent
The genius of this attack lies in its self-propagating nature: the malicious behavior becomes embedded at the binary level and is undetectable through inspection of the source code alone. Trusting the compiler binary is therefore essential, as the hack demonstrates how source code auditing alone cannot guarantee the absence of malicious behavior.

\subsection{G\"{o}del Numbering}
\label{sub_godel}

We aim to certificate programs by assigning unique identifiers to each instruction that encodes the operation, its operands, and its relative position in the program simultaneously. These identifiers are then combined to produce a single number that uniquely represents a program in such manner that any changes made to the order of operations or operands will produce a different certificate. This is accomplished with a numbering system based on {\it G\"{o}del Numbers}.
This numbering system works according to the following sequence of steps, which Example~\ref{ex_godel} will illustrate:

\begin{enumerate}
\item \textbf{Assign Unique Numbers to Symbols:}  
Create an \textit{encoding table} where each symbol in the logical system (e.g., variables, operators, parentheses) is assigned a unique integer.

\item \textbf{Map Each Position to a Prime Number:}  
Use the sequence of prime numbers (\(2, 3, 5, 7, \dots\)) to represent the positions of symbols in the expression.  
The \(n\)-th symbol in the expression corresponds to the \(n\)-th prime number.

\item \textbf{Calculate the Contribution of Each Symbol:}  
For each symbol \(s_i\) at position \(i\):  
\begin{itemize}
  \item Take the \(i\)-th prime \(p_i\) as the \textit{base}.
  \item Use the integer assigned to \(s_i\) from the encoding table as the \textit{exponent}.
  \item Compute \(p_i^{e_i}\), where \(e_i\) is the encoded value of \(s_i\).
\end{itemize}

\item \textbf{Multiply All Contributions:}  
Compute the Gödel number by multiplying the contributions of all symbols:  
\[
G = \prod_{i=1}^{n} p_i^{e_i}
\]
where \(n\) is the total number of symbols in the expression.
\end{enumerate}

\begin{example}
\label{ex_godel}
Figure~\ref{fig_godelExample} shows how we can encode the expression $a + b$ using a G\"{o}del Numbering System.
Let us assume that our example system assigns the following numbers to each potential symbol in our arithmetic system:
$a \mapsto 4, + \mapsto 3, b \mapsto 7$.
Similarly, let us assume that the order in which symbols appear in the logic expression are associated with the primes in ascending order, e.g.: $1^{st} \mapsto 2, 2^{nd} \mapsto 3, 3^{rd} \mapsto 5, \ldots$.
Given these assumptions, we compute the contribution of each one of the three symbols in the expression as follows:
$2^4, 3^3$ and $5^7$.
The final encoding is the product of all these parcels, e.g., $16 \times 27 \times 78,125 = 33,750,000$.
\end{example}

\begin{figure}[ht]
\centering
\includegraphics[width=0.9\textwidth]{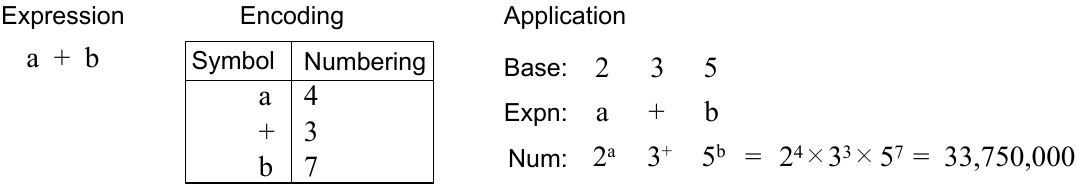}
\caption{An example of encoding based on G\"{o}del Numbers.}
\Description{An example of encoding based on G\"{o}del Numbers.}
\label{fig_godelExample}
\end{figure}

G\"{o}del's numbering relies on the Fundamental Theorem of Arithmetic~\cite{gauss1966disquisitiones}, which states that every integer has a unique prime factorization.
This ensures that each G\"{o}del number maps uniquely to one expression and that the original expression can be reconstructed by factorizing the number.
Thus, the G\"{o}del Numbering System achieves two properties: {\it uniqueness} and {\it reversibility}.
The numbering system proposed in this work also meets these two properties.

\section{Program Certification}
\label{sec_solution}

Our solution to the \textit{trusting trust} problem is implemented by the \textsc{Charon} compiler.
It includes a toy language, \textsc{CharonLang}, that covers a large subset of the C language, and a compiler that targets an intermediate representation (\textsc{CharonIR}) inspired by the RISC-V architecture \cite{Asanovic14}.

\paragraph{\textsc{CharonLang}}
\label{par_toy_language}
The \textsc{Charon} toy language supports all the control-flow constructs present in \textsc{FaCT}, the Flexible and Constant Time cryptographic programming language \cite{Cauligi19}, namely forward and backward branches (loops) plus non-recursive function calls.
In this regard, we adopt the implementation of \textsc{FaCT} available in the work of \citet{Soares23}.
Additionally, this subset of the C language supports integers (16-bit $\mathtt{short}$, 32-bit $\mathtt{int}$) and 32-bit floating point (\texttt{float}) types, plus (implicit) type casts\footnote{The compiler emits adequate type cast instructions to compatibilize operands. Users can not explicitly make type coercion operations.} and all the ANSI C unary and binary mathematical, logical, and bit-wise operations.
\textsc{CharonLang} also supports static arrays and user-defined structures that combine built-in types.
Similarly to \textsc{FaCT}, we chose not to add support to pointer arithmetic to \textsc{CharonLang}.
Figure~\ref{fig_charonGrammar} presents its grammar specification.

\begin{figure}[tbh]
\centering
\begin{footnotesize}
\input{assets/charonGrammar.tex}
\end{footnotesize}
\caption{The \textsc{Charon} Toy Language grammar.}
\Description{The \textsc{Charon} Toy Language grammar.}
\label{fig_charonGrammar}
\end{figure}

\begin{example}
\label{ex_charlang}
Figure~\ref{fig_charonLang} (a) shows the \textsc{CharonLang} implementation of a program that computes the maximum common divisor of two integers.
Part (b) of that figure shows the program's abstract syntax tree.
The compiler ($\mathit{Comp}$ in Fig.~\ref{fig_certificationOverview}) visits each node of the AST to produce the low-level program representation seen in Figure~\ref{fig_charonLang} (c).
Similarly, the high-level certification algorithm ($\mathit{Cert}_H$ in Fig.~\ref{fig_certificationOverview}) visits the AST to produce the program's certificate, as Section~\ref{sub_source_code} will explain.
\end{example}

\begin{figure}
\centering
\includegraphics[width=1\linewidth]{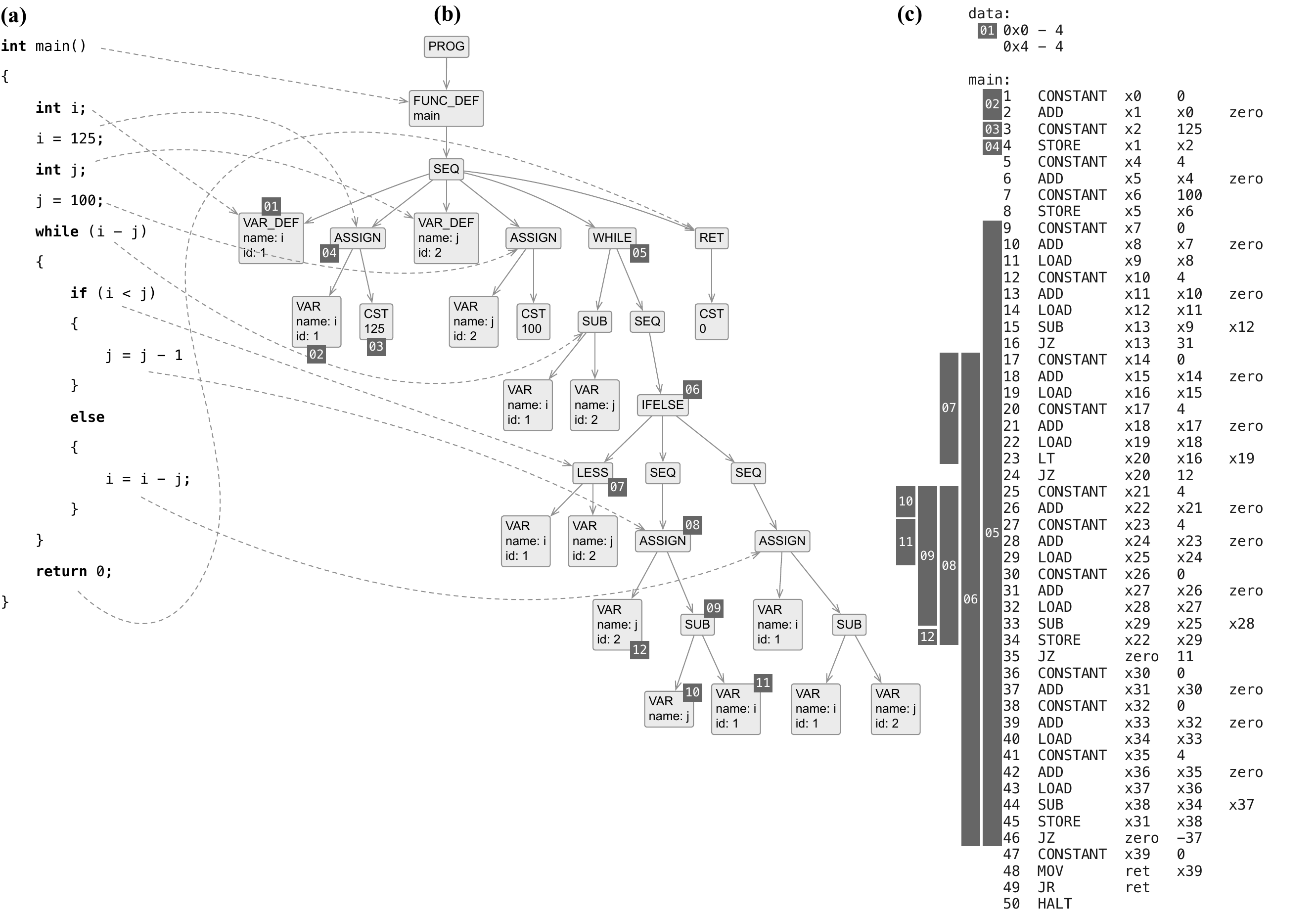}
\caption{(a) Maximum common divisor implemented in \textsc{CharonLang}
(b) Abstract syntax tree.
(c) Low-level representation of the algorithm written in \textsc{CharonIR}.}
\Description{The \textsc{CharonLang} and the \textsc{CharonIR}.}
\label{fig_charonLang}
\end{figure}

\paragraph{Intermediate Representation}
The \textsc{Charon} Intermediate Representation, \textsc{CharonIR}, uses the set of instructions presented in Figure ~\ref{fig_instructionSet}.
Notice that the IR features floating-point variations for all the operations, except the bit-wise ones.
These variations have been omitted for brevity.
Most instruction parameters are registers, except for \texttt{imm}, which takes a constant literal.

\begin{figure}[tbh]
\centering
\begin{footnotesize}
\input{assets/instructionSet.tex}
\end{footnotesize}
\caption{The \textsc{Charon} intermediate representation.
We let \textsc{Rd} be destination and \textsc{R1/R2} be source registers.}
\Description{The \textsc{Charon} Virtual Machine instruction set.}
\label{fig_instructionSet}
\end{figure}

\begin{example}
\label{ex_charonir}
Figure~\ref{fig_charonLang} (c) shows the three-address code representation of the implementation of maximum common divisor seen in
Figure~\ref{fig_charonLang} (a).
This example highlights that any value, variable or constant must be loaded into a temporary register before use.
\textsc{CharonIr} provides the \texttt{CONSTANT} instruction to load constants, because this explicit handling of literals simplifies the construction of their certificates, as Section~\ref{sub_machine_code} will explain.
Retrieving the contents of a variable with \texttt{LOAD} requires the compiler to save its address in a temporary register first.
This is achieved in two steps: first, it will emit a \texttt{CONSTANT} with its temporary address; then, it will compute the offset, if any, and store the final address in the register to load from.
Following RISC-V, an object file is exported with minimal metadata; namely, a \texttt{data} section that describes the lengths of variables identified by their base addresses.
\end{example}

As Example~\ref{ex_charonir} illustrates, \textsc{Charon}'s intermediate representation follows the RISC-V syntax.
The primary difference from RISC-V is that \textsc{CharonIR} has access to an infinite number of register names.
Thus, \textsc{CharonIR} registers should be understood as memory locations instead of typical registers.
Despite this, the IR includes some predefined registers, adhering to the typical RISC-V/Linux Application Binary Interface.
These predefined registers include one for storing the return address of functions, others for holding function arguments and return values, and a dedicated register that always contains zero for convenience.  
Variables in \textsc{CharonIR} can be defined at multiple allocation sites, meaning the IR does not conform to the static single-assignment (SSA) format~\cite{Cytron91}, so pervasive in modern compiler design.
 
\paragraph{Translation Specification}
Intermediate Representation code is generated by parsing \textsc{CharonLang} following the translation rules in Figure~\ref{fig_translationSpecification}.
Figure~\ref{fig_translationSpecification} uses a runtime environment $T$ that maps variables to memory addresses.
The $n$-th declared variable $x$ will be mapped to its memory address, e.g., such as $T(x) \rightarrow address(n)$.
Whenever a new variable or function is declared, the environment $T$ is updated to a new $T'$ that includes this definition.
This mapping is necessary to distinguish variables that have the same name, when generating code (as seen in Figure~\ref{fig_translationSpecification}), and when generating certificates.

\begin{figure}[tbh]
\centering
\input{assets/translationSpecification.tex}
\caption{Specification of the translation rules.}
\Description{Specification of the translation rules.}
\label{fig_translationSpecification}
\end{figure}

\subsection{The Numbering Schema}
\label{sub_numbering_schema}

Given a \textsc{CharonLang} program $P$ written with $n$ symbols, its certificate $\mathit{Cert}(P)$ is the product of the first $n$ primes, each raised to a particular exponent.
Each symbol uses a different exponent, as we have already illustrated in Examples~\ref{ex_exampleCertificate} and~\ref{ex_godel}.
Figure~\ref{fig_symbolTable} shows the mapping of symbols to exponents.

\begin{example}
\label{ex_if_then_else}
Figure~\ref{fig_symbolTable} shows that an \textsc{If-Then} construct will be always associated with a triple of exponents $(41, 43, 47)$.
These exponents indicate, respectively, the condition the if-then-else command will evaluate, the start of the conditional code block, and the end of the conditional code block.
In regards to if-then-else statements, we have chosen to assign explicit markers to the beginning and to the end of these blocks so the certificate can be inverted back into the original program.
Section~\ref{sub_canonical} discusses inversion in more detail.
\end{example}

\begin{figure}[tbh]
\centering
\begin{footnotesize}
\input{assets/symbolTable.tex}
\end{footnotesize}
\caption{Mapping of programming constructs to the exponents used in the numbering schema.}
\Description{Symbol derivation rules.}
\label{fig_symbolTable}
\end{figure}

In contrast to the encoding of control-flow statements, the encoding of constants, variables, and functions vary depending on which symbols are being referenced.
These cases are discussed below:
\begin{description}
\item [Constant:] A constant $c$ is associated with the exponent $11^{c'}$, where $c'$ is $c + 1$ if $c \geq 0$, or $c$ otherwise.
Therefore, each constant is represented by a unique value.
Adding one is necessary to avoid the identity case of the exponentiation.

\item [Variable definition:] Variables definitions encoded with the base number $13$ to the power of $\mathit{var.\ def.\ exp.}$, following the formula $\mathit{var.\ def.\ exp.} = \mathit{type\ symbol}_1^{\mathit{type\ symbol}_2^{...}}$, where $\mathit{type\ symbol}_n$ is the symbol associated with the type of the $\mathit{n}$-th element of the variable.
Scalars will have a single $\mathit{type\ symbol}$.
Function parameters are encoded similarly, but use $23$ as the base number.

\item [Variable usage:] Variable usage cases employ the base number $17$, to the power of $\mathit{var.\ usage\ exp.}$, which considers the variable prime $\mathit{vp}$ and the memory offset of this element from the variable's base address.
This exponent comes from the formula $\mathit{var.\ usage\ exp.} = \mathit{vp}^{\mathit{mem.\ offset}}$.
If the memory offset is static (i.e., indexing an array with a constant, or accessing an element from a \textit{struct}), then $\mathit{mem.\ offset} = 2^{\mathit{os} + 1}$, where $\mathit{os}$ is the offset in bytes.
Scalar variables are covered by this case and have $\mathit{os} = 0$.
If the memory offset is dynamically determined (as in indexing an array with a variable), then $\mathit{mem.\ offset} = 3^{\mathit{vp}_{id}}$, where $\mathit{vp}_{id}$ is the variable prime of the index.

\item [Function definition:] Function definitions are associated with a pair of symbols: \textsc{Func. (start)} and \textsc{Func. (end)}.
These symbols explicitly delimit the beginning and the end of the function scope.
While the latter uses $37$ as the base number, the former is represented by $31^{\mathit{func.\ def.\ exp.}}$, following the formula $\mathit{func.\ def.\ exp.} = \mathit{type\ symbol}(\theta)^{(\mathit{num.\ params.}(f) + 1)}$ where $\mathit{type\ symbol}(\theta)$ is the symbol associated with the function type, and $\mathit{num.\ params.}$ is the number of parameters the function takes.
$\mathit{num.\ params.}$ is incremented by 1 to avoid the identity case of exponentiation for the case when a function takes no parameters.

\item [Function call:] Function calls are encoded with a base number, $29$, to the power of the function prime $\mathit{fp}_m$ that corresponds to the called function.
\end{description}

\subsubsection{Identification of variables and functions}
\label{sss_variable_and_function_primes}

The Numbering Schema uses unique prime numbers as a means to identify variables and functions.
These numbers are determined by the variable's or function's declaration site, in a way that resembles De Bruijn's indexation~\cite{Brijn94}.

We collect all the variables definitions --- including function parameters ---, in the order they are defined, and map the $\mathit{n}$-th active variable in this set to the $\mathit{n}$-th prime number $\mathit{vp}_n$.
A variable is active if it has at least one use case, and we treat function parameters as always active.
The prime of a variable is emitted as soon as an use case of it is found.

Function primes $\mathit{fp}_m$ are produced by mapping the $m$-th function label to the $m$-th prime number $\mathit{fp}_m$.
These numbers are emitted as soon as the function definition is parsed by the compilation algorithm $\mathit{Comp}$.
Unlike variable primes, that only consider active variables, function primes are emitted regardless of whether the function is ever called.

\begin{example}
Consider the set of variables definitions $\{ \mathit{var_1}, \mathit{var_2}, \mathit{var_3} \}$.
Assume that $\mathit{var_1}$ and $\mathit{var_3}$ both have use cases, but $\mathit{var_2}$ does not.
In this case, $\mathit{var_1}$ is mapped to the first variable prime $\mathit{vp}_1 = 2$, while $\mathit{var_3}$ is mapped to the second variable prime $\mathit{vp}_2 = 3$.
\end{example}

Variable primes $\mathit{vp}$ and function primes $\mathit{fp}$ are managed by a certification environment $\mathit{C}$.
The high-level certification algorithm uses the $\mathit{C}_H$ environment, which will be discussed in Section~\ref{sub_source_code}.
$\mathit{C}_H$ maps variables' aliases to variable primes, and function names to function primes.
The low-level certification algorithm, on the other hand, is described in Section~\ref{sub_machine_code} and uses $\mathit{C}_L$.
$\mathit{C_L}$ maps variables' base addresses to variable primes, and the function labels to function primes.
While these environments are fully independent, Figure~\ref{fig_varAndFuncPrime} presents how they produce variable and function primes from the compilation process.

\begin{figure}[tbh]
\centering
\input{assets/varAndFuncPrime}
\caption{Emission of variable and function primes $\mathit{vp}_n$ and $\mathit{fp}_m$.}
\Description{Emission of variable and function primes $\mathit{vp}_n$ and $\mathit{fp}_m$.}
\label{fig_varAndFuncPrime}
\end{figure}

\begin{example}
\label{ex_numberingSchema}
Figure ~\ref{fig_numberingSchema} shows how the Numbering Schema handles functions and variables.
As \texttt{func\_1} is the first function declared in this program, it will be identified by the first function prime, $\mathit{fp}_1 = 2$.
The next declared functions (\texttt{func\_2} and \texttt{main}) are identified by $\mathit{fp}_2 = 3$ and $\mathit{fp}_3 = 5$, respectively.
A similar logic applies to variables definitions, but, in this case, their types are also taken into consideration.
The first variable to be declared is \texttt{int param\_1}, from \texttt{func\_1}.
It has $\mathit{type\ symbol} = 3$ and will be identified by the first variable prime $\mathit{vp}_1 = 2$.
The next variables are \texttt{float\ param\_2} ($\mathit{vp}_2 = 3$, $\mathit{type\ symbol} = 5$), and \texttt{int\ some\_var[3]} ($\mathit{vp}_3 = 5$, $\mathit{type\ symbol} = 7^{7^{3}}$), and \texttt{int\ var\_2} ($vp_{5} = 11$, $\mathit{type\ symbol} = 3$).
\texttt{some\_var} is an array.
Only its third position is used in the program, and it is not indexed via variable indices; hence, only the third position consumes a prime in the certification process.
The last line of the example shows an expression that takes a variable and the result of a function call as operands.
To encode it, the Numbering Algorithm will use the addition base symbol ($79$), and will produce symbols to represent the variable usage and the function call.
The variable usage symbol will be the base $17$ to the power of $\mathit{vp}^{\mathit{mem. offset}}$, where $\mathit{mem. offset} = 2^{0 + 1} = 2$ and $vp_5 = 11$.
Rearranging, the symbol will be $17^{11^{2}} = 17^{121}$.
The function call symbol, in turn, will be $29^{\mathit{fp}_m} = 29^3$.
The positional primes of each component have been omitted for brevity.
\end{example}

\begin{figure}[htb]
    \centering
    \includegraphics[width=1\linewidth]{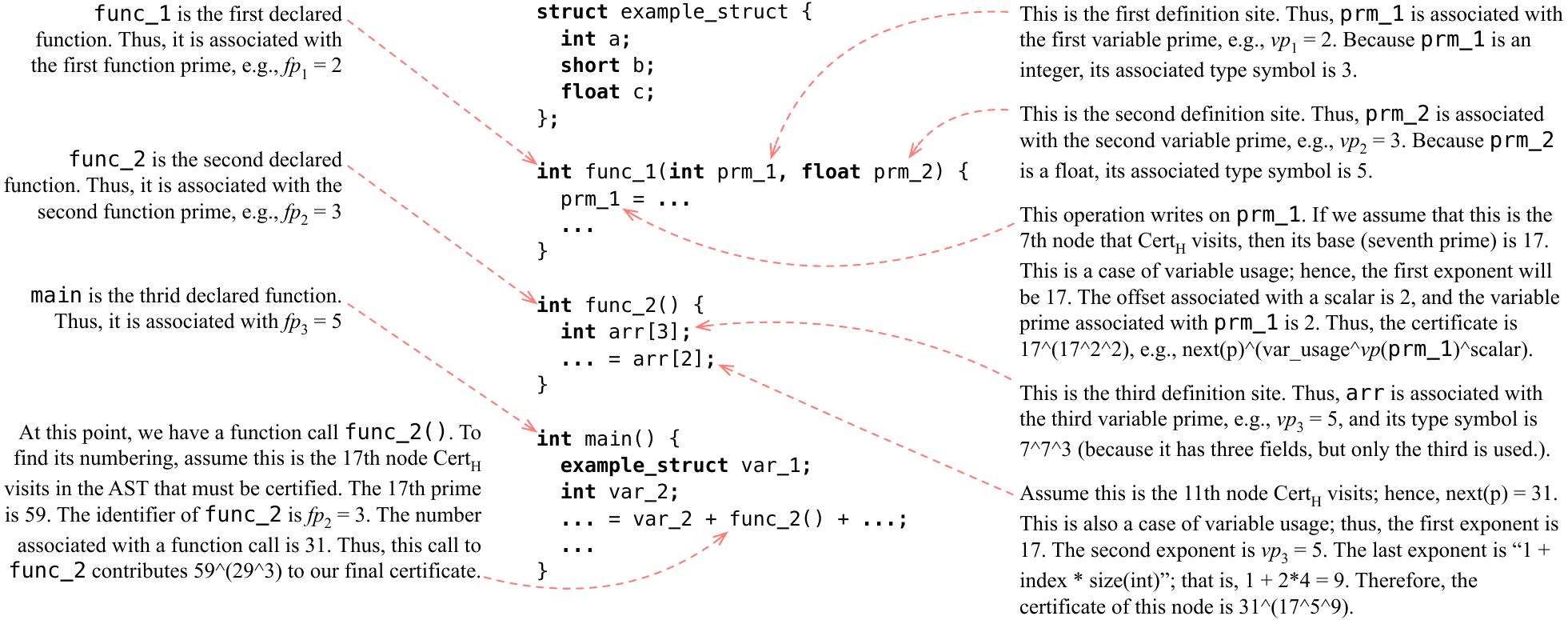}
    \caption{Example illustrating the certification of variables and functions.}
    \Description{Example illustrating the certification of variables and functions.}
    \label{fig_numberingSchema}
\end{figure}

By associating a different prime to each variable, the certification schema is able to distinguish programs that access the same variables, but in different order.
We notice that this schema does not take commutativity into consideration.
Thus, $x + y$ and $y + x$ will produce different certificates.
This property applies to both, the high-level certifier and the low-level certifier, for, according to Figure~\ref{fig_numberingSchema}, both associate user-defined symbols with primes via the environments $C_H$ and $C_L$.

\subsection{Source Code Certification}
\label{sub_source_code}

The certification of programs written in the high-level language (\textsc{CharonLang}) happens in two phases.
First, an environment $C_H$ is created following the rules in Figure~\ref{fig_varAndFuncPrime}.
Then, the certificates are produced by a function $\mathit{Cert}_H$, following the rules in Figure~\ref{fig_highLevelCert}.
Notice that these rules are parameterized by the environment $C_H$, and that we do not pass $C_H$ as a parameter to $\mathit{Cert}_H$, because it never changes; rather, we treat $C_H$ as an immutable global table.

In Figure~\ref{fig_highLevelCert}, $p$ represents the current positional prime, and the other values follow the Numbering Schema seen in Section~\ref{sub_numbering_schema}.
The certifier $\mathit{Cert}_H$ traverses the program's abstract syntax tree, assigning numbers to nodes in post-order.
This is motivated by the fact that the compiler will generate code for the operands before emitting the instructions of the operations themselves.
By certifying expressions in post-order, we conciliate the sequence of numbers produced by $\mathit{Cert}_H$ with the sequence produced by $\mathit{Cert}_L$.

After traversing the AST and computing the encoding numbers for all of its nodes, $\mathit{Cert}_H$ will move all the variable and parameter definition symbols to the beginning of the certificate in the order they appear in the original program, and recompute all of the positional primes.
The relative position of the remaining numbers is also preserved.
This is a key design choice that enables the inversion of certificates\footnote{Adding symbols such as \textsc{Cond}, \textsc{If (start)}, and \textsc{If (end)} to explicitly mark the boundaries of each part of conditionals when handling them follows the same motivation.}.
Inversion is discussed in Section~\ref{sub_canonical}.

\begin{figure}[htb]
    \centering
    \begin{footnotesize}
    \input{assets/highLevelCert.tex}
    \end{footnotesize}
    \caption{Rules for the certification of programs in the high-level language \textsc{CharonLang}.}
    \Description{Source Code Certification rules.}
    \label{fig_highLevelCert}
\end{figure}

\label{ex_source_code_cert}
\begin{figure}
    \centering
    \includegraphics[width=1\linewidth]{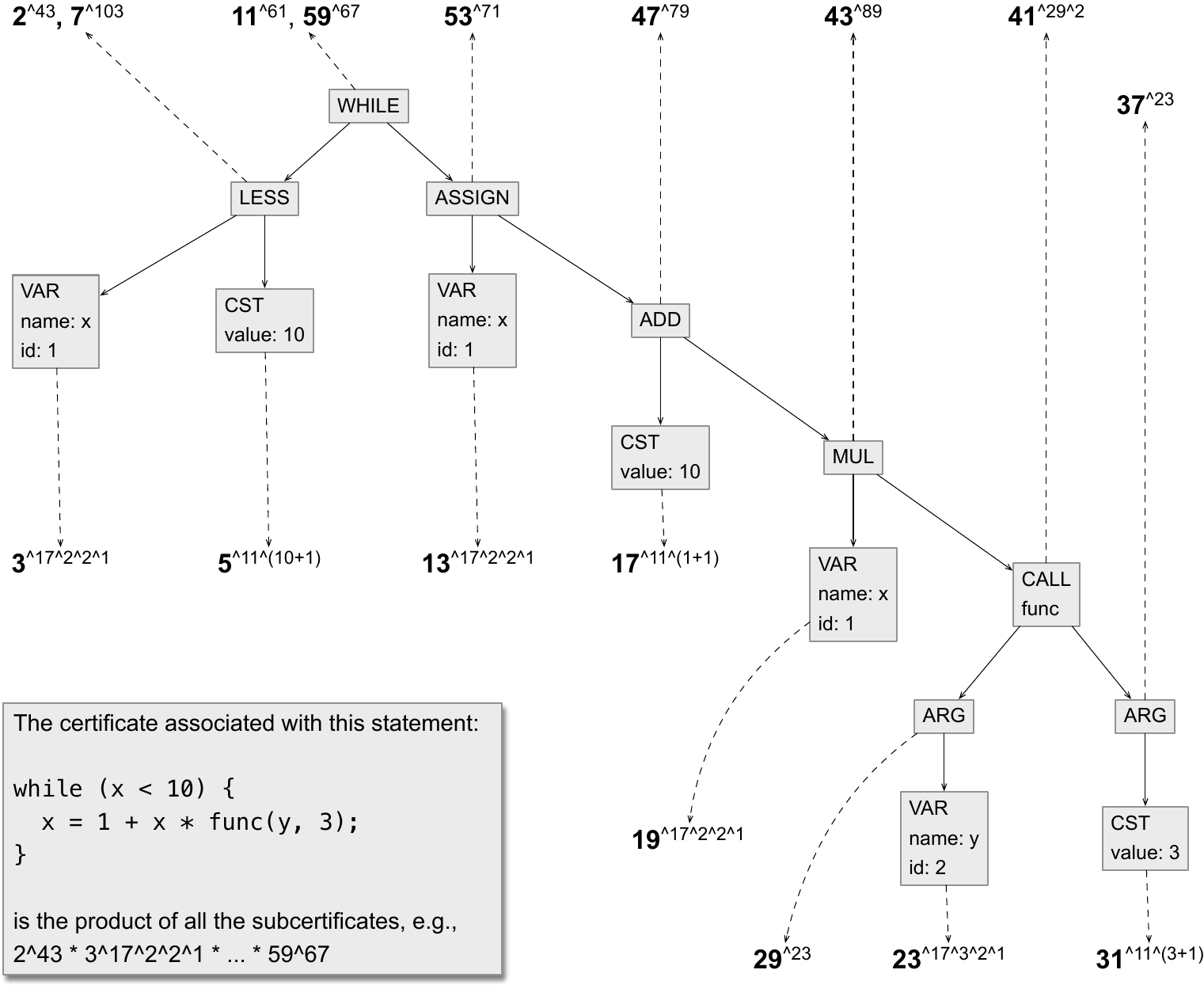}
    \caption{Certification of a \textsc{While} statement, i.e., $Cert_H(\textbf{while} (e) \ \{ S \}) ::= p^{43} \times Cert_H(e) \times next(p) ^ {61} \times Cert_H(S) \times next(p) ^ {67}$.}
    \Description{Certification of a \textsc{While} statement.}
    \label{fig_astCert}
\end{figure}

\begin{example}
\label{ex_astCert}
Figure~\ref{fig_astCert} shows the certificate produced for a \textsc{While} statement.
This example assumes that $x$ has $\mathit{vp}_{1} = 2$, $y$ has $\mathit{vp}_{2} = 3$, and
\texttt{func} has $\mathit{fp}_{1} = 2$.
Following the rules in Figure~\ref{fig_highLevelCert}, $\mathit{Cert}_H$ first certificates the conditional in the loop, which is the binary operation $x < 10$.
In this case, $\mathit{Cert}_H$ will first emit the \textsc{Cond} symbol, and then compute the certificate of the left-hand side, the reading of variable $x$; the right-hand side, the constant $10$; and finally the binary \texttt{LESS} operation itself.
After that, $\mathit{Cert}_H$ emits the number associated with the start of a \textsc{While} construct, which ends up as $11^{61}$, i.e., the fifth prime raised to 61.
In this case, 61 is the exponent associated with the beginning of a \textsc{While} block.
Next, $\mathit{Cert}_H$ proceeds to build a certificate to the body of the while operation, similarly to what we saw in Example~\ref{ex_numberingSchema}.
Finally, it computes the number that represents the end of the conditional code, $59^{67}$.
The exponent 67 is associated with the end of a \textsc{While} block.
\end{example}

\subsection{Machine Code Certification}
\label{sub_machine_code}

The Machine Code Certification algorithm $\mathit{Cert}_L$ produces a certificate for the compiled program, which exists in the low-level \textsc{CharonIR}.
The compiled program consists of a list of instructions for the Virtual Machine to execute, and a set of metadata --- the \texttt{data} section, as previously introduced.
As in the high-level case, the certification of low-level code happens in two phases.
The first creates the environment $C_L$ (Figure~\ref{fig_varAndFuncPrime}); whereas the second builds the certificate itself (Figure~\ref{fig_lowLevelCert}).

As already mentioned, Figure~\ref{fig_varAndFuncPrime} shows the first phase of the certification process.
The rules in Figure~\ref{fig_varAndFuncPrime} produce a map between the base memory address of each variable and a unique variable prime $\mathit{vp}$, and add it to the runtime environment $\mathit{C_L}$.
This environment also contains mappings between each function label to an unique function prime $\mathit{fp}$.

The second phase of the certification process, described in Figure~\ref{fig_lowLevelCert}, performs five actions:

\begin{enumerate}
\item \textbf{Map Dependencies Between Instructions And Temporary Registers:}
Produce a mapping between instructions and the temporary registers they depend upon.
For instance, it will map the instruction of a binary operation to the temporary registers that contains its operands.
This process is recursive.
As such, if one of the operands of a binary operation $\mathit{Op}_1$ is the result of another binary operation $\mathit{Op}_2$, $\mathit{Op}_1$ dependency mapping will also include the registers $\mathit{Op}_2$ depends upon.

\item \textbf{Infer Variables Types:}
Produce a mapping between variables addresses and their inferred types.
Types are inferred based on the sequence of instructions used to read or write a variable.
Reading from and writing to \texttt{float} variables will use the \texttt{LOADF}/\texttt{STOREF} instructions, respectively.
The integer types, \texttt{int} and \texttt{short}, will use \texttt{LOAD} and \texttt{STORE} --- the latter will also be truncated, with the \texttt{TRUNC} instruction, immediately after being read or before being written.

\item \textbf{Analyze Functions:}
Produce a mapping between functions labels and their inferred return types and number of parameters.
Function type inference is computed from inferring the type of the value they return, following the same logic applied to infer variables types.
The number of parameters it takes, on the other hand, is obtained from the counting the occurrences of the parameter passing pattern\footnote{\texttt{CONSTANT R$_c$ var$_{\mathit{address}}$}; \texttt{STORE R$_c$ R$_\mathit{arg}$}} within the function scope.
This pass will also set up the insertion points of the \textsc{Func. (start)} and \textsc{Func. (end)} symbols in the program certificate, based on the scope each function label delimits.

\item \textbf{Analyze Conditional Jumps:}
Search for conditional jumps (i.e., the $\mathtt{JZ}$ instruction) and determine the first instruction that is used in its condition.
This pass is based on the previously computed dependency mapping, and it sets up the insertion point of the \textsc{Cond} symbol in the program certificate.

\item \textbf{Certificate Instructions:}
Visit each instruction of the machine code to compute the adequate number following the rules if Figure~\ref{fig_lowLevelCert}.
This figure presents the patterns the low-level certifier $\mathit{Cert}_H$ will attempt to match.
The certifier will then proceed to the first instruction after the instructions in the pattern it just matched until it reaches the end of the program.
This follows the design of certificating constructs, rather than individual instructions.
\end{enumerate}

\begin{figure}[htb]
    \centering
    \begin{footnotesize}
    \input{assets/lowLevelCert}
    \end{footnotesize}
    \caption{Machine Code Certification rules.}
    \Description{Machine Code Certification rules.}
    \label{fig_lowLevelCert}
\end{figure}

\label{ex_machine_code_cert}
\begin{figure}[htb]
    \centering
    \includegraphics[width=1\linewidth]{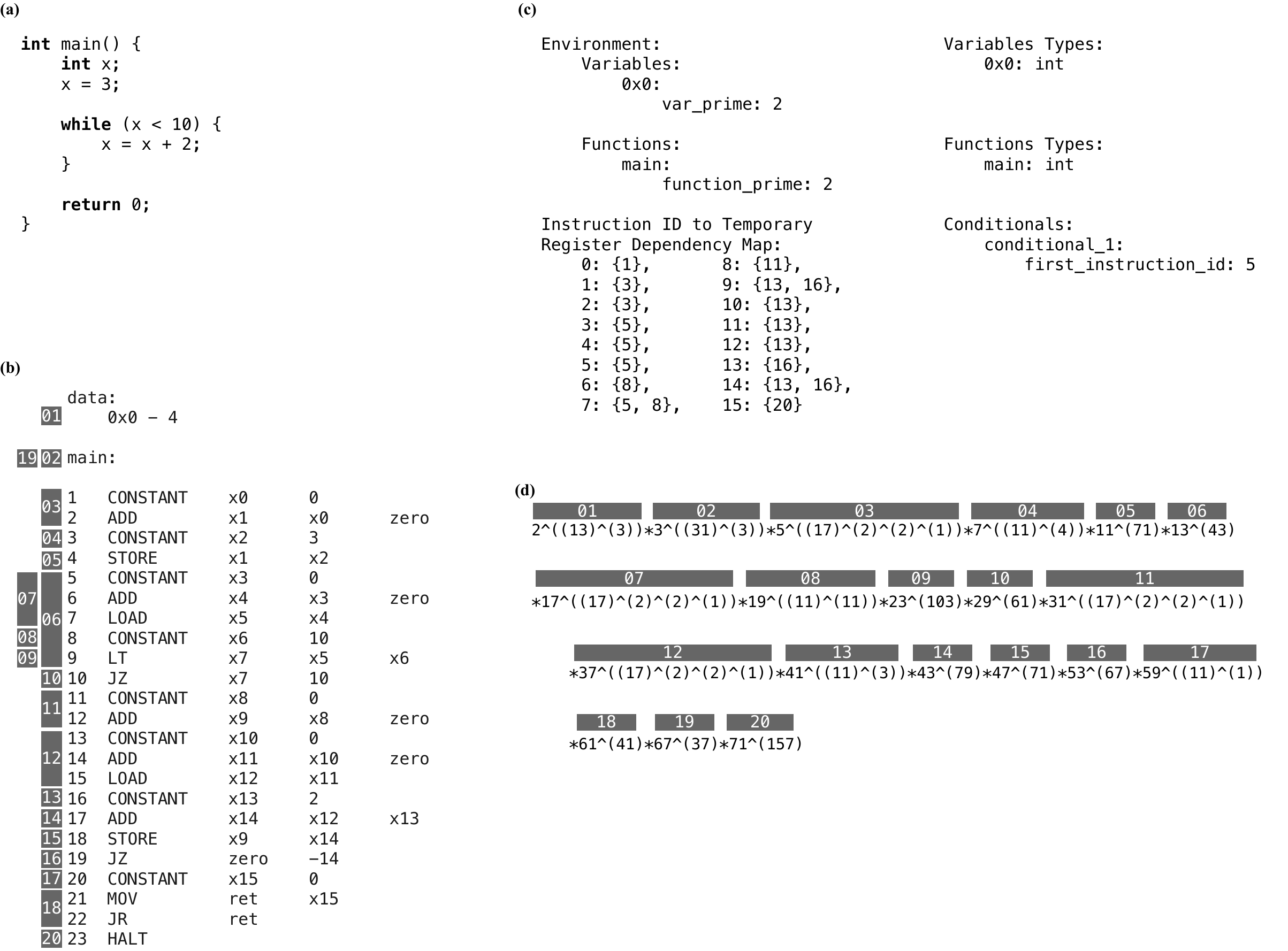}
    \caption{Certification of a simple program with a \textsc{While} loop.}
    \Description{Certification of a simple program with a \textsc{While} loop.}
    \label{fig_machineCodeCert}
\end{figure}

\begin{example}
\label{ex_machineCodeCert}
Figure~\ref{fig_machineCodeCert} (a) displays the a simple program implemented in \textsc{CharonLang} that contains a \textsc{While} loop.
Part (b) of that figure presents its low-level representation in \textsc{CharonIR}.
The figure groups sequences of instructions into blocks to highlight the patterns that the certifier has matched to produce the final certificate.
These blocks are mapped to part (d), which presents the certificate that encodes this program.
Part (c) presents the environment $\mathit{C_L}$ after it is populated by the rules in Figure~\ref{fig_varAndFuncPrime}, and the resulting maps of passes (1) through (3).
\end{example}

\subsection{The Canonical Format}
\label{sub_canonical}

Certificates are invertible.
Thus, if $n = \mathit{Cert}_H(P) = \mathit{Cert}_L(\mathit{Comp}(P))$, then it is possible to reconstruct a program $P_c$ from $n$.
These two programs, $P$ and $P_c$ are semantically equivalent\footnote{In this paper, we do not show semantic equivalence between high- and low-level constructs: we only show proof that the compilation of one will yield the other. In our work, the only difference between programs that lead to the same canonical form is the declaration and storage of variables. Thus, it is simple to conclude that they would implement the same semantics.}; however, they are not syntactically equivalent.
The program $P_c$ is called the {\it canonical} representation of $P$.
As such, the canonical form $P_c$ has the same certificate as the program $P$ it was reconstructed from.
Notice that different programs might have the same canonical representation.
If two programs have the same canonical representation, then they implement the same semantics.
However, these programs might differ in how data is allocated.
The following differences are possible:

\begin{description}
\item [Variables:] All the variables are global, except for function parameters.
The canonical form preserves the order in which variables are declared in the original program.

\item [Structures:] The canonical representation only considers data structures elements that are used in the program.
However, it preserves the number of elements and their order in the original program.
Thus, unused elements in the original program are still defined in the canonical representation, but they are typed as $\mathtt{\_\_unknown\_type\_\_}$.
Arrays that are only indexed by constants in the original program are represented as structs in the canonical format.
Arrays that are indexed with a variable, on the other hand, are still declared as arrays in the canonical format.

%\item [Functions:] The canonical form retrieves full information of the functions in the original program from the \textsc{Func. (start)} and \textsc{Func (end)} symbols in the certificate. These symbols delimit the function scope, and the former also informs the function return type and the number of parameters it takes. The order and the types of function parameters are retrieved from \textsc{Func. param.} symbols in the certificate.

% \item [Control-flow:] Certificates encode the start and the end of if-then-else and while blocks. Thus, the canonical format encodes these constructs as they appear in the original program.
\end{description}

\label{ex_canonical_flow}
\begin{figure}[htb]
    \centering
    \includegraphics[width=1\linewidth]{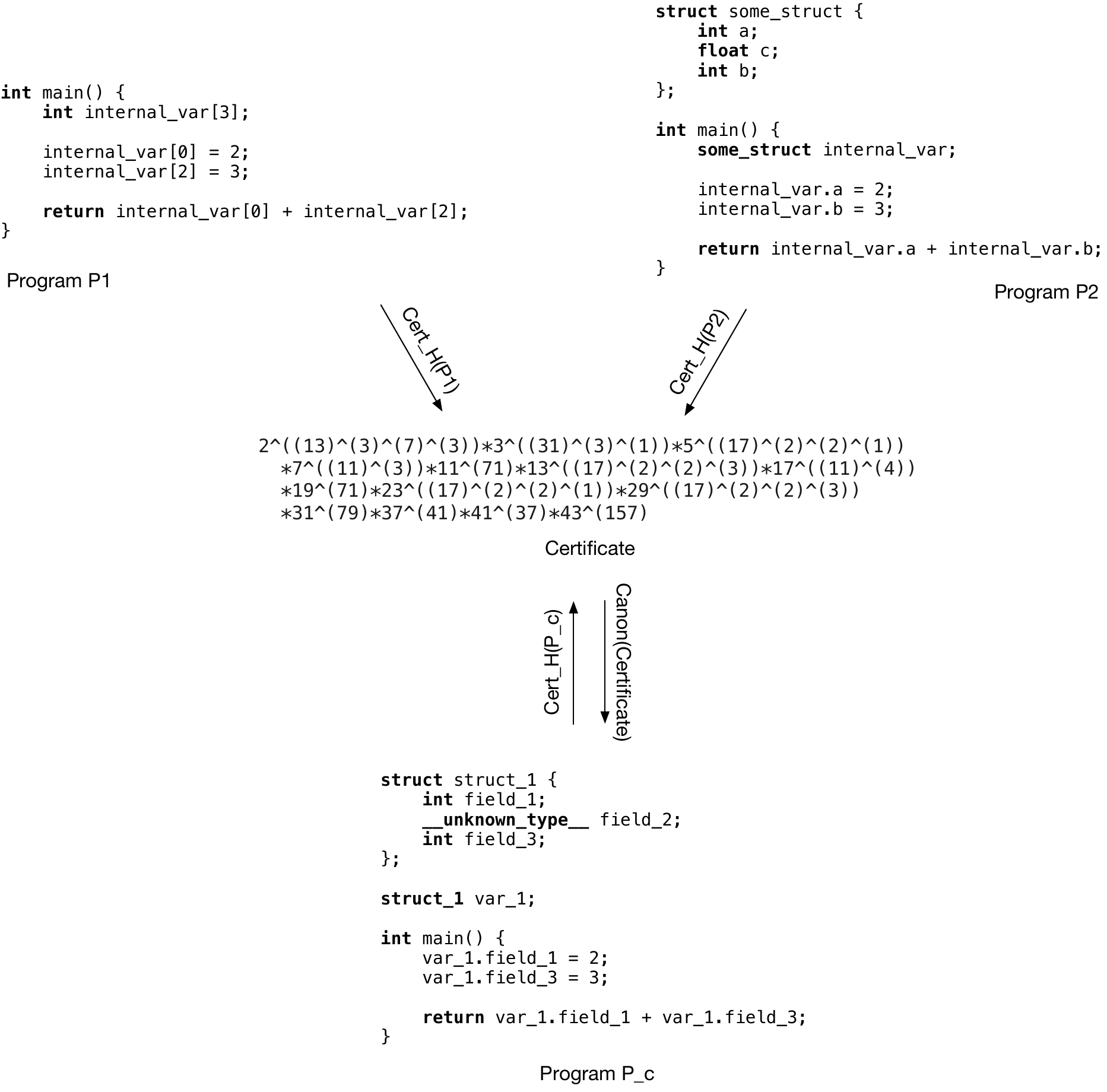}
    \caption{Inversion of the certificate of two semantically equivalent high-level programs.}
    \Description{Inversion of the certificate of two semantically equivalent high-level programs.}
    \label{fig_canonical_flow}
\end{figure}

\begin{example}
\label{ex_canonical}
Figure~\ref{fig_canonical_flow} presents two high-level programs, $P1$ and $P2$, such that $\mathit{Cert}_H(P1) = \mathit{Cert}_H(P2)$.
While these programs are not syntactically identical, they possess the same semantics: they set the values of the first and third elements of a data structure --- both integers --- using constant offsets and return their sum.
These programs differ only on how variables are stored.
The second program stores variables as fields of a struct.
The first does the same, albeit using an array.
The canonical representation uses a struct to represent both situations.
However, one of the fields of this struct is left undefined.
Notice that this field still occupies 4 bytes, as this is the default alignment that we use in \textsc{CharonLang}.
Given that this field is never used, it is also not declared, meaning that it could be represented as any type that fits into 4 bytes.
In this case, this element's type is set to \texttt{\_\_unknown\_type\_\_} in the canonical representation, which occupies one word in the memory layout.
\end{example}

\subsubsection{The Canonical Reconstruction}
\label{sss_inv}

Given a certificate $n$, it is possible to obtain a canonical program $P_c$, such that $\mathit{Cert_H}(P_c) = n$.
This reconstruction process is defined by a function $\mathit{Canon}(n)$, which Algorithm~\ref{algo_inv_structural} shows.
This reconstruction process happens in four stages:

\begin{enumerate}
\item \textbf{Factorization:} Factorize the certificate $n$ into a product of primes in the form $p^{\mathit{exp}}$, where $p$ is the positional prime and $\mathit{exp}$ is the construct-encoding exponent.
This exponent is expected to follow the symbols and rules from~\hyperref[sub_numbering_schema]{Numbering Schema}.

\item \textbf{Analysis:} This step decides whether a data structure --- i.e., a variable whose definition has more than one type symbol in $\mathit{exp}$ --- should be defined as an array or as a struct.
This is achieved by analyzing all the use cases of these variables.
If the variable's attributes are only accessed with static offsets, it will be defined as a struct; $\mathit{Canon}$ defines it as an array otherwise.

\item \textbf{Generation:} Iterate once more over the factorized certificate to produce the canonical program by matching the pattern from the Numbering Schema, and applying the rules discussed above.

\item \textbf{Finalization:} As there is no explicit symbol for the \texttt{main} function in the certificate, and \textsc{CharonLang} assumes all the programs will follow a def-use relation, $\mathit{Canon}$ will set the last function it defines to be the \texttt{main} function.
\end{enumerate}

Algorithm~\ref{algo_inv_structural} shows the pseudo-code of the reconstruction process for expressions, and it is implemented by the procedure $\mathit{Canon}$.
As a simplification, it omits the handling of functions and control-flow constructs.
The rest of this section illustrates how the reconstruction process in Algorithm~\ref{algo_inv_structural} works through a series of examples, following Figure~\ref{fig_canonicalizer}.
This figure shows the canonical reconstruction of the certificate $n$ from Figure~\ref{fig_canonical_flow}.

\label{ex_canonicalizer}
\begin{figure}[htb]
    \centering
    \includegraphics[width=1\linewidth]{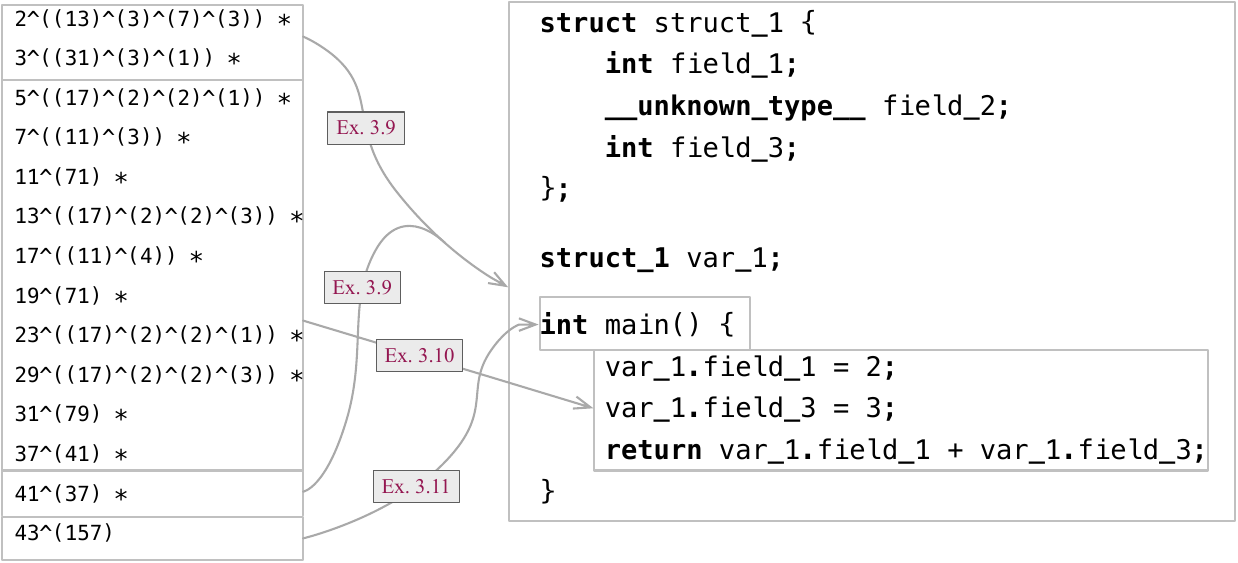}
    \caption{The canonical reconstruction of a simple program.}
    \Description{The canonical reconstruction of a simple program.}
    \label{fig_canonicalizer}
\end{figure}

\paragraph{$\mathit{Canon}$ Parses Certificates Left-to-Right.}
The canonical program generation begins by parsing variable definitions ($\mathit{exp.} = 13^{\mathit{var.\ def.\ exp.}}$) and function parameters ($\mathit{exp.} = 19^{\mathit{var.\ def.\ exp.}}$).
These are the first constructs encoded in the certificate, if any, as discussed in section~\ref{sub_source_code}.
Variable definitions will be added to the program body immediately.
Function parameters, on the other hand, are first added to the list of function parameters, and then passed to the program when $\mathit{Canon}$ parses a \textsc{Func. (start)} symbol ($\mathit{exp.} = 31^{\mathit{type\ symbol}(\mathit{type}(f))^{(\mathit{num. params.}(\mathit{f}) + 1)}}$).
In this case, it will take the first $n$ elements from this list and add to the function definition, where $n$ is the number of parameters this function takes obtained from the symbol.
$\mathit{Canon}$ will name the $n$-th declared variable or parameter \texttt{var\_n}, and the $m$-th defined function \texttt{func\_m}, and will  retrieve its type from $\mathit{exp}$.
It will use these names to refer to the variable or parameter with the $n$-th variable prime, or the function with the $m$-th function prime in the certificate factors that follow.

\begin{example}
\label{ex_canonicalReconstruction_1}
The certificate in Figure~\ref{fig_canonicalizer} encodes a single variable definition, \texttt{internal\_var}.
Its factor is $2 \textasciicircum ((13) \textasciicircum (3) \textasciicircum (7) \textasciicircum (3))$, and it tells this variable is a data structure as there are three type symbols in it ($3$, $7$, and $3$).
$\mathit{Canon}$ will search for variable use cases with $2$ (the first prime) as the variable prime because this is the first variable to be declared.
There are four instances: $5 \textasciicircum((17) \textasciicircum(2) \textasciicircum(2) \textasciicircum(1))$, $13 \textasciicircum((17) \textasciicircum(2) \textasciicircum(2) \textasciicircum(3))$, $23 \textasciicircum((17) \textasciicircum(2) \textasciicircum(2) \textasciicircum(1))$, and $29 \textasciicircum((17) \textasciicircum(2) \textasciicircum(2) \textasciicircum(3))$.
The fields of the variable are accessed with a constant for index in all of the instances; as such, this variable will be set to be declared as a struct with an integer, an element of unknown type, and another integer --- in this order.
The next factor, $3 \textasciicircum ((31) \textasciicircum (3))$, marks the beginning of a function, which is first defined as \texttt{func\_1}.
This function scope will be closed when $\mathit{Canon}$ reaches factor $41 \textasciicircum 37$.
\end{example}

\paragraph{Expressions in Reverse Polish Notation.}
Expressions are parsed following a simple Reverse Polish Notation (RPN) inversion algorithm.
Operands --- variables ($\mathit{exp.} = 17^{\mathit{var.\ usage\ exp.}}$), constants ($\mathit{exp.} = 11^{c'}$), and function calls ($\mathit{exp.} = 29^{\mathit{fp}}$) --- are always added to the expression stack.
Function calls are also parsed following this notation: the algorithm will first parse the arguments ($\mathit{exp.} = 23$), and then the function call itself.
When $\mathit{Canon}$ finds a certificate factor that represents an operator, it will take the top elements (one if it is an unary operator, or two if it is a binary operator) from the stack, write the parenthesized expression, and add it back to the top of the stack.
The expression is only added to the program body when the algorithm finds a terminator.
Any factor that does not encode an operand or an operator is a terminator.
The assignment operator is an exception for this rule, for it terminates an expression in RPN.

\begin{example}
\label{ex_canonicalReconstruction_2}
Following, $5 \textasciicircum ((17) \textasciicircum (2) \textasciicircum (2) \textasciicircum (1))$ encodes a variable usage and causes $\mathit{Canon}$ to start analyzing an expression.
This certificate factor has $2$ as the variable prime; as such, it refers to \texttt{var\_1}.
Given that this variable is a struct, and that the certificate factor encodes the access of the first element, $\mathit{Canon}$ will add \texttt{.field\_1} as element access suffix to the variable name it will push to the expression stack.
$7 \textasciicircum ((11) \textasciicircum (3))$ comes right after it, and it encodes a constant.
As the associated constant is positive, $\mathit{Canon}$ will subtract 1 from it before pushing it to the expression stack.
The next factor, $11 \textasciicircum (71)$, is the assignment operator, and the algorithm will write \texttt{var\_1.field\_1 = 2;}.
As it is a terminator, this expression will be immediately added to \texttt{program}.
This iterative process continues until $\mathit{Canon}$ reaches the last factor, $43 \textasciicircum (157)$.
\end{example}

\paragraph{Reconstruction of Control Flow.}
Control-flow constructs are computed in $\mathit{Canon}$ by first parsing the expression that follows a \textsc{Cond} factor ($\mathit{exp.} = 43$), until it finds the pattern that marks the beginning of a conditional code block (eg., \textsc{If (start)}, which has
$\mathit{exp} = 47$).
Then, it will add the control-flow construct followed by the conditional expression to the program body.
Function and control-flow scopes are closed as soon as the algorithm finds the corresponding ending marker (eg., \textsc{If (end)}, which has $\mathit{exp.} = 53$).

\begin{example}
\label{ex_canonicalReconstruction_3}
As $\mathit{Canon}$ finds the program-ending factor ($\mathit{exp.} = 157$), it will add the \texttt{main} function to the program it just created.
It will rename the last function it defined --- in this case, \texttt{func\_1} --- to \texttt{main}.
Finally, it returns the canonical program $\mathit{P_c}$.
\end{example}

\begin{algorithm}[t!]
\caption{\textsc{Canon} — The Reconstruction of Canonical Forms\label{algo_inv_structural}}
\begin{small}
    \begin{algorithmic}[1]
\STATE \textbf{Input:} Certificate $n$
\STATE \textbf{Output:} Canonical program $P_c$
\STATE

\STATE \textbf{function} \textsc{Canon}($n$)
\begin{ALC@g}
    \STATE \textbf{return} \textsc{CanonRec}(F $\leftarrow$ \textsc{Factorize}($n$),
    P $\leftarrow$ empty program,
    S $\leftarrow$ empty stack,
    v $\leftarrow$ 0)
\end{ALC@g}

\STATE

\STATE \textbf{function} \textsc{CanonRec}($F$, $P$, $S$, $v$)
\begin{ALC@g}
    \IF{$F$ is empty}
        \STATE $P.\textsc{AddMain}()$
        \STATE \textbf{return} $P$
    \ENDIF

    \STATE $\text{f, rest} \leftarrow F[0], F[1:]$
    \STATE $\text{e} \leftarrow \textsc{GetExpFromFactor}(\text{f})$

    \IF{\textsc{IsVarDef}$(\text{e})$}
        \STATE $v \leftarrow v + 1$
        \STATE $\text{name} \leftarrow \texttt{``var\_''} + v$
        \STATE $\tau \leftarrow \textsc{GetVarType}(\text{e})$

        \IF{\textsc{IsSingleType}$(\tau)$}
            \STATE $P.\textsc{AddGlobalVariable}(\text{name}, \tau)$
        \ELSIF{\textsc{IsStruct}(e)}
            \STATE $\text{fields} \leftarrow \textsc{GetFieldsFromVarType}(\tau)$
            \STATE $P.\textsc{AddStruct}(\text{name}, \text{fields})$
        \ELSE
            \STATE $\text{size} \leftarrow \textsc{GetSizeFromVarType}(\tau)$
            \STATE $P.\textsc{AddArray}(\text{name}, \tau, \text{size})$
        \ENDIF

    \ELSIF{\textsc{IsVarUsage}$(\text{e})$}
        \STATE $\text{name} \leftarrow \textsc{GetVarNameFromVarUsage}(e)$
        \STATE $S.\textsc{Push}(\text{name})$

    \ELSIF{\textsc{IsConstant}$(\text{e})$}
        \STATE $S.\textsc{Push}((c - 1)$ if $c \geq 0$ else $c)$

    \ELSIF{\textsc{IsOperation}$(\text{e})$}
        \STATE $\text{op} \leftarrow \textsc{GetOperationSymbolFromExp}(\text{e})$
        \STATE $\text{rhs} \leftarrow S.\textsc{Pop}()$
        \IF{\textsc{IsUnary}(e)}
            \STATE $\text{expr} \leftarrow (\text{op} + \text{rhs})$
        \ELSE
            \STATE $\text{lhs} \leftarrow S.\textsc{Pop}()$
            \STATE $\text{expr} \leftarrow (\text{lhs} + \text{op} + \text{rhs})$
        \ENDIF
        \STATE $S.\textsc{Push}(\text{expr})$
    \ENDIF

    \IF{\textsc{IsTerminator}$(\text{e})$}
        \STATE $\text{stmt} \leftarrow S.\textsc{Pop}()$
        \STATE $P.\textsc{AddExpr}(\text{stmt})$
    \ENDIF

    \STATE \textbf{return} \textsc{CanonRec}$(\text{rest}, P, S, v)$
\end{ALC@g}
\end{algorithmic}
\end{small}
\end{algorithm}

\subsection{Correctness}
\label{sub_correctness}

This section provides sketches of proofs for the two main results discussed in this paper.
In Section~\ref{sss_correctness_equivalence} we show that compilation preserves certificates.
Then, in Section~\ref{sss_correctnessInvertibility} we show that certificates are invertible.

\subsubsection{Correctness of Equivalence of Certificates}
\label{sss_correctness_equivalence}

To ensure that the certification process is correct, we establish that our compiler preserves the certification result. 
Specifically, for any high-level program $P_H$ compiled to a low-level program $P_L$, the outputs of the certification algorithms $\mathit{Cert}_H$ and $\mathit{Cert}_L$ coincide.
This property requires the $n$-th variable prime $\mathit{vp_n}$ produced by both certification algorithms $\mathit{Cert}_H$ and $\mathit{Cert_L}$ to be equal.
This is proven in Lemma~\ref{lm_variablePrimes}.
It shows that the certification environments $\mathit{C_H}$ and $\mathit{C_L}$ produced for the high- and low-level certification algorithms, respectively, map the high- and low-level representations of a left-hand side term $a$ to the same variable prime.

\begin{lemma}[Equivalence of Variable Primes]
\label{lm_variablePrimes}
If $a$ is a left-hand side term, and $\mathit{Comp}(a) = \ldots T(a)$, then $\mathit{C_H(a)} = \mathit{C_L(T(a))}$.
\end{lemma}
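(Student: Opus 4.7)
The plan is to prove the lemma by tracking how the two environments $C_H$ and $C_L$ are populated during compilation, and establishing the invariant $C_H(v) = C_L(T(v))$ by induction on the derivation of $\mathit{Comp}$. The key observation, visible in Figure~\ref{fig_varAndFuncPrime}, is that the two environments are never updated independently: every rule that introduces a fresh variable prime $vp'$ simultaneously produces the bindings $\{v \mapsto vp'\}$ and $\{T(v) \mapsto vp'\}$. The proof amounts to formalizing this lockstep evolution and showing that the function $T$ is injective on the set of bound variables so that the second mapping is well-defined.

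First I would strengthen the statement into a simultaneous induction hypothesis: for any compilation judgment $T \vdash \mathsf{Comp}(\cdot) \Rightarrow (\ldots, C_H, C_L, vp)$ reachable from the initial empty environments, and for every variable $v$ in $\mathrm{dom}(C_H)$, we have $T(v)$ defined and $C_H(v) = C_L(T(v))$. I would then proceed by induction on the compilation derivation. For rules that do not introduce or reference variables, the hypothesis passes through unchanged. For the three variable-access rules in Figure~\ref{fig_varAndFuncPrime} (scalar $v$, array $v[e]$, and struct $v.x$), there are two subcases: (i) $v \in C_H$, in which case neither environment is extended and the invariant follows directly from the induction hypothesis together with the fact that $T$ already assigns $v$ a memory address preserved by compilation; and (ii) $v \notin C_H$, in which case the rule emits a single fresh $vp' = \mathsf{next}(vp)$ and installs it in both $C_H$ and $C_L$ at keys $v$ and $T(v)$ respectively, making $C_H'(v) = vp' = C_L'(T(v))$ by construction, while entries inherited from the smaller environments still satisfy the invariant.

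The main obstacle, and the step that requires care, is the well-definedness of $C_L$: since its keys are addresses $T(v)$, the invariant could fail silently if two syntactically distinct variables $v_1 \ne v_2$ happened to share an address $T(v_1) = T(v_2)$. I would discharge this with a side lemma stating that $T$ is injective on the set of live variables, which follows from the variable-declaration rule $T_{n'} = T_n + (v, n)$ with $n' = n + \mathrm{len}(\theta)$ in Figure~\ref{fig_translationSpecification}: each new variable is assigned a disjoint address range, and bindings are never overwritten. A parallel observation for functions handles the analogous case of $fp$ in the function-definition rule, though the lemma as stated only concerns left-hand side terms so this can be relegated to a remark. With injectivity of $T$ in hand, the inductive step closes cleanly, and the lemma follows by applying the invariant to the final environments produced when compiling the whole program.
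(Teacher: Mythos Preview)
Your proposal is correct and follows essentially the same approach as the paper: a case analysis over the rules of Figure~\ref{fig_varAndFuncPrime} showing that $C_H$ and $C_L$ are always extended in lockstep with the same fresh prime. Your version is in fact more careful than the paper's, since you explicitly raise and discharge the injectivity of $T$ needed for $C_L$ to be well-defined, a point the paper leaves implicit.
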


\begin{proof}
The proof goes by case analysis on the rules shown in Figure~\ref{fig_varAndFuncPrime}.
Consider the case where $a = \mathit{v}$, i.e., a variable.
The corresponding rule Figure~\ref{fig_varAndFuncPrime} shows that if $\mathit{v} \notin C_H$, then a fresh variable prime $\mathit{vp}' = \mathsf{next}(\mathit{vp})$ is generated and $C_H$ is updated with $\mathit{v} \mapsto \mathit{vp}'$, while $C_L$ is updated with $T(\mathit{v}) \mapsto \mathit{vp}'$.
Thus, both environments remain equivalent, assigning the same prime to $\mathit{v}$ and to its corresponding address in the low-level domain.

The array access case, $a = \mathit{v}[e]$, happens in two steps.
First, $\mathit{v}$ is compiled under the same rule as above, assigning it a prime if it is not already mapped to one in $C_H$ and $C_L$.
Then, the index $e$ is compiled separately.
If $e$ happens to be a variable, then it will also be subject to the \textit{variable} rule, and the certification environments will be updated accordingly.
If not, then the compilation of $e$ will not change $C_H$ and $C_L$.
As such, any changes to the certification environment will be consistent across $C_H$ and $C_L$.

For $a = \mathit{v}.x$, i.e., a field access in a struct, the rule shows that the base variable $\mathit{v}$ is compiled following the first rule, and then the offset $T(\mathit{v}.x)$ is added to it.
Since $C_H$ and $C_L$ are only updated when $\mathit{v}$ is compiled (and not for the field access itself), consistency is preserved for the assigned variable prime.

Finally, in the case of function definitions, $\theta_f\ f$, the rule specifies that if $f \notin C_H$, a fresh function prime $\mathit{fp} = \mathsf{next}(\mathit{fp})$ is generated.
The environments $C_H$ and $C_L$ are then updated with $f \mapsto \mathit{fp}$ and $T(f) \mapsto \mathit{fp}$, respectively.
Since the same prime is used in both updates, the environments remain equivalent.

In all cases, when a fresh prime is introduced, it is added to both environments simultaneously.
Thus, we have that for all $a$, the relation $\mathit{C_H}(a) = \mathit{C_L}(T(a))$ is true.
\end{proof}

Theorem~\ref{thm_certificationPreservation} is the core result of this work.
It establishes that certification is preserved through compilation: the certificate computed from a high-level program matches exactly the one computed from its compiled low-level form.
This result ensures that certificates abstract program behavior in a way that is invariant under compilation.

\begin{theorem}[Preservation of Certificates]
\label{thm_certificationPreservation}
If $P_H$ is a program in the high-level \textsc{CharonLang} language,
then $\mathit{Cert}_H(P_H) = \mathit{Cert}_L(\mathit{Comp}(P_H))$.
\end{theorem}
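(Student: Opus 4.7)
The plan is to proceed by structural induction on the abstract syntax tree of $P_H$, following the grammar of Figure~\ref{fig_charonGrammar}. For each production one must show that the certificate produced by the high-level rule of Figure~\ref{fig_highLevelCert} coincides with the certificate that Figure~\ref{fig_lowLevelCert} produces when applied to the sequence of instructions emitted by the compilation rule of Figure~\ref{fig_translationSpecification}. The induction must be carried out at three levels: expressions (literals, variable reads, array and struct accesses, unary and binary operations), statements (assignments, sequences, conditionals, loops, returns, and calls), and full procedure definitions. Lemma~\ref{lm_variablePrimes} is the workhorse for the atomic cases: whenever an exponent depends on a user-defined identifier through $C_H$, the lemma guarantees that the corresponding low-level exponent, obtained through $C_L$ and $T$, is numerically identical.

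First I would dispatch the base cases. For a numeric literal $n$, the translation rule emits a single $\mathtt{CONSTANT}$ instruction, which is exactly the pattern consumed by the constant clause of $\mathit{Cert}_L$; both sides produce $p^{11^{c'}}$ under the same offset convention. For a variable read $x$, the translation rule emits the $\mathtt{CONSTANT};\mathtt{ADD}$ block encoding the effective address (extended with a second $\mathtt{CONSTANT};\mathtt{ADD}$ pair when a static or dynamic offset is needed), and the variable-usage clauses of Figure~\ref{fig_lowLevelCert} collapse exactly that block into one factor $p^{17^{\mathit{var.\ usage\ exp.}}}$; by Lemma~\ref{lm_variablePrimes} the resulting exponent matches $\mathit{Cert}_H(x)$. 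The struct-field and array-index variants proceed along the same lines, matching the longer emission pattern against the correspondingly longer low-level rule.

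Next I would discharge the inductive cases one construct at a time. For a binary operation $e_1\ \Theta\ e_2$, compilation produces $I_0; I_1; \mathtt{BinOp}_\Theta\ \mathtt{R}\ \mathtt{R}_0\ \mathtt{R}_1$, and the low-level rule $\mathit{Cert}_L(S; S';\mathtt{BinOp}\ \ldots)$ returns $\mathit{Cert}_L(I_0) \times \mathit{Cert}_L(I_1) \times \mathit{next}(p)^{\mathit{symbol}(\Theta)}$, which by the induction hypothesis on the subexpressions equals $\mathit{Cert}_H(e_1\ \Theta\ e_2)$. Control-flow constructs admit an analogous treatment: the emission patterns for \textbf{if}, \textbf{if-else}, \textbf{while}, and \textbf{return} are mirrored precisely in Figure~\ref{fig_lowLevelCert}, with the delimiting exponents $43,47,53,59,61,67$ appearing at identical positions on both sides. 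Function definitions and calls combine the procedure-definition clauses of the two certifiers with Lemma~\ref{lm_variablePrimes} applied to function primes, together with the fact that both $\mathit{type\ symbol}(\theta)$ and $\mathit{num.\ params.}(f)$ are intrinsic to the source and reconstructible verbatim from the compiled label block.

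The principal obstacle is keeping the positional primes synchronized. Since a certificate is a product over prime bases indexed by position, one must verify that the $k$-th factor emitted by $\mathit{Cert}_H$ shares its prime base with the $k$-th factor emitted by $\mathit{Cert}_L$, despite the two certifiers walking different structures. This forces the inductive invariant to be stated not merely as equality of certificates, but as equality together with the promise that both certifiers have consumed the same number of positional primes; the post-order traversal of $\mathit{Cert}_H$ and the post-order pattern matching of $\mathit{Cert}_L$ are engineered precisely so that this holds. A second layer of bookkeeping concerns the final reordering pass of $\mathit{Cert}_H$, which moves every variable and parameter declaration to the front of the certificate and renumbers positional primes: this must be matched against the \texttt{data} section that heads the compiled object and that the first clauses of Figure~\ref{fig_lowLevelCert} consume in declaration order. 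Establishing that this renumbering is consistent with the \texttt{data}-section traversal is the most delicate part of the argument, but it becomes routine once the positional invariant has been set up correctly.
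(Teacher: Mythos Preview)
Your proposal is correct and follows essentially the same approach as the paper: structural induction on the \textsc{CharonLang} syntax, with Lemma~\ref{lm_variablePrimes} handling the atomic cases and the inductive hypothesis carrying the compound ones. You are in fact more careful than the paper's own proof, which glosses over the positional-prime synchronization and the declaration-reordering pass that you explicitly flag; the paper simply asserts that ``the same structure appears in both $\mathit{Cert}_H$ and $\mathit{Cert}_L$'' without tracking the strengthened invariant you identify.
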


\begin{proof}
The proof proceeds by structural induction on the syntax of the source program $P_H$.

\textbf{Base cases:} the certification of constants, variables, array and struct accesses does not involve the recursive certification of subexpressions; thus, the theorem follows without induction:

\begin{description}
\item [Constant:] If $c$ is a constant, then 
$\mathit{Cert}_H(c) = \mathit{Cert}_L(\mathtt{CONSTANT\ R_c\ c}) = 2^{11^c}$

\item [Variable:] If $x$ is a variable name, then
$\mathit{Cert}_H(x) = \mathit{Cert_L}(\mathtt{CONSTANT \ R_c \ x};\ \mathtt{ADD\ R_x \ R_c \ zero}) = p ^ {{17} ^ \mathit{var.\ usage\ exp.}}$.
The exponent $\mathit{var.\ usage\ exp.}$ tells whether this is a simple variable or an element in a data structure --- and whether it is accessed with a static \footnote{As an array with constant for index or element in a struct} or a dynamic offset\footnote{As an array with a variable for index}.
If it is a data structure, the high- and low-level programs $P_H$ and $P_L$ programs will also contain expressions or instructions to calculate the element offset from the base variable.

In either case, $\mathit{Cert}_H$ and $\mathit{Cert_L}$ will have the same environments $\mathit{C_H}$ and $\mathit{C_L}$, as a direct result from Lemma~\ref{lm_variablePrimes}.
As $\mathit{var.\ usage\ exp.}$ depends solely on the expressions and instructions that implement the variable and the certification environments $\mathit{C_H}$ and $\mathit{C_L}$, we can conclude that the high- and low-level certification of variables coincide.
\end{description}

\textbf{Inductive step:} We apply induction to prove preservation of certificates on constructs whose syntax is defined recursively in Figure~\ref{fig_charonGrammar}.
Thus, if we have a construct defined as $S_0 ::= \ldots S_1 \ldots$, then we assume preservation over $S_1$ to demonstrate that it holds for $S_0$.
As an example, we shall consider the compound program construct $P_H = S_0 = \texttt{if}(e)\ \{ S \}$.
According to the compilation rules (Fig.~\ref{fig_translationSpecification}), the compiler emits the following instruction sequence for such conditionals:
\begin{align*}
    \mathit{Comp}(\texttt{if}(e)\ \{ S \})  &= I_e;\ \texttt{JZ}\ R_e\ \ell;\ I_t,\ T' \\
    \mathit{Comp}(e) &\rightarrow I_e,\ R_e \\
    \mathit{Comp}(S) &\rightarrow I_t,\ T'
\end{align*}

From the certificate construction rules (Fig.~\ref{fig_astCert}), we know:
\[
\mathit{Cert}_H(\texttt{if}(e)\ \{ S \}) 
= p^{43} \times \mathit{Cert}_H(e) \times \mathit{next}(p)^{47} \times \mathit{Cert}_H(S) \times \mathit{next}(p)^{53}
\]

Similarly, from the low-level certificate construction rules (Fig.~\ref{fig_lowLevelCert}):
\[
\mathit{Cert_L}(I_e;\ \texttt{JZ}\ R_e\ \ell;\ I_t)
= p^{43} \times \mathit{Cert_L}(I_e) \times \mathit{next}(p)^{47} \times \mathit{Cert_L}(I_t) \times \mathit{next}(p)^{53}
\]

By the inductive hypothesis, we assume:
\begin{align*}
\mathit{Comp}(e) = I_e &\Rightarrow \mathit{Cert}_H(e) = \mathit{Cert_L}(I_e) \\
\mathit{Comp}(S) = I_t &\Rightarrow \mathit{Cert}_H(S) = \mathit{Cert_L}(I_t)
\end{align*}

We now observe that the same structure appears in both $\mathit{Cert}_H$ and $\mathit{Cert_L}$.
From this, we can conclude that the conditional symbol $p^{43}$ and the the control flow symbols $\mathit{next}(p)^{47}$ and $\mathit{next}(p)^{53}$ will be placed identically in both constructions, as they depend exclusively on the structure of the programs.
Therefore,

\[
\mathit{Cert}_H(\texttt{if}(e)\ \{ S \}) = \mathit{Cert_L}(\mathit{Comp}(\texttt{if}(e)\ \{ S \}))
\]

The proof of equivalence between $\mathit{Cert}_H$ and $\mathit{Cert_L}$ for other constructs is analogous.
\end{proof}

Unlike the certification of high-level programs, where $\mathit{Cert}_H(P_{H_1})$ might be equal to $\mathit{Cert}_H(P_{H_2})$ even if $P_{H_1} \neq P_{H_2}$, $\mathit{Cert_L}(P_{L_1})$ will be equal to $\mathit{Cert_L}(P_{L_2})$ if, and only if $P_{L_1} = P_{L_2}$.
This property is demonstrated in Theorem~\ref{thm_uniquenessLowLevel}.
To prove it, we will use an auxiliary result: Lemma~\ref{lm_uniquenessLowLevelFactor}, which demonstrates that each product of certificate factors can only be obtained from unique sequences of instructions.

\begin{lemma}[Uniqueness of Low-level Certificate Factors]
\label{lm_uniquenessLowLevelFactor}
The low-level certification function $\mathit{Cert}_L$ is injective.
\end{lemma}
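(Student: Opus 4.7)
The plan is to proceed by strong induction on the number of instructions in $P_L$, exploiting the Fundamental Theorem of Arithmetic together with the fact that the pattern-matching rules of Figure~\ref{fig_lowLevelCert} are mutually exclusive and exhaustive. Suppose $\mathit{Cert}_L(P_{L_1}) = \mathit{Cert}_L(P_{L_2}) = n$. The first step is to factorize $n$ uniquely into $\prod_i p_i^{e_i}$, where the $p_i$ are the positional primes in ascending order. Since the positional primes are fixed and independent of the program, the multiset $\{(p_i, e_i)\}$ and therefore the ordered exponent sequence $(e_1, \ldots, e_k)$ are uniquely determined by $n$.

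Next, I would show that each exponent $e_i$ unambiguously identifies both the construct that produced it and the instruction subsequence that was matched. The key observation is that the outermost base of every exponent in Figure~\ref{fig_symbolTable} is a distinct prime (e.g., $11$ for constants, $13$ for variable definitions, $17$ for variable uses, $19$ for parameters, $23$ for arguments, $29$ for calls, $31/37$ for function delimiters, $43/47/53/59/61/67$ for control flow, and distinct primes for each operator). Thus, by inspecting $e_i$ one can decide uniquely which rule in Figure~\ref{fig_lowLevelCert} fired. For compound exponents such as $17^{\mathit{var.\ usage\ exp.}}$, further factorization of the inner exponent distinguishes the three sub-patterns (scalar, static offset, dynamic offset) and recovers the variable prime $\mathit{vp}$, hence via $C_L$ the exact address. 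For $31^{\mathit{type\ symbol}(\theta)^{\mathit{num.\ params.}+1}}$, the inner tower recovers both the return type and the arity.

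Given an identified rule and its parameters, the instruction subsequence on the left-hand side of that rule is fully determined: each rule in Figure~\ref{fig_lowLevelCert} fixes the opcode pattern, the relative register flow (destination of one instruction being source of the next), and every immediate value occurring in the pattern (the recovered address, offset, constant, jump length, or function label). Recovered subexpression certificates $\mathit{Cert}_L(S)$, $\mathit{Cert}_L(S')$ are shorter strings of exponents and are handled by the inductive hypothesis, yielding unique $S$ and $S'$. Splicing the reconstructed pieces together yields a unique instruction sequence, so $P_{L_1} = P_{L_2}$.

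The main obstacle is establishing that the rules truly partition the space of instruction sequences: the variable-usage pattern (\texttt{CONSTANT};\ \texttt{ADD};\ \texttt{CONSTANT};\ \texttt{ADD}) is a prefix of the field-access pattern, which is itself a prefix of the dynamic-index pattern, and the function-call pattern embeds \texttt{MOV} sequences used elsewhere for argument passing and return. The argument that rescues injectivity is that the positional prime determines where one pattern ends and the next begins, because the certifier advances by exactly one positional prime per matched construct; equivalently, the exponent $e_i$ itself encodes how many instructions were consumed (three sub-patterns for $17$ are disambiguated by the exponent's inner shape, and the argument-passing $23$ is only emitted after the \texttt{MOV R$_\mathit{arg}$ R} suffix is matched, never in isolation). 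I would discharge this by a careful case analysis on the top-level base of each $e_i$, showing that in every case the rule's left-hand side is determined and its length is a function of $e_i$ alone, so the decomposition of the instruction stream is forced.
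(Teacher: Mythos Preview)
Your proposal is correct and follows essentially the same route as the paper: both arguments hinge on the Fundamental Theorem of Arithmetic to recover the ordered exponent sequence, then dispatch on the outermost prime base of each exponent to identify the rule that fired, with special case analysis for the $17^{\mathit{var.\ usage\ exp.}}$ sub-patterns and for the three control-flow rules sharing the leading $p^{43}$ factor. The paper's proof of this lemma is terser, handling only the exponent-disambiguation step here and deferring the inductive reconstruction of the instruction stream to the subsequent Theorem~\ref{thm_uniquenessLowLevel}; you fold both steps into a single strong-induction argument, and you are more explicit than the paper about the pattern-prefix hazard and why the positional-prime discipline resolves it.
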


\begin{proof}
The proof proceeds by analyzing the cases presented in Figure~\ref{fig_lowLevelCert}.
Most cases are trivial because each construct-encoding exponent  (the $p$ on the right side of Figure~\ref{fig_lowLevelCert}) is unique and strictly positive, and only appears in a single rule.
Given that both $p$ and every exponent are prime numbers, the Fundamental Theorem of Arithmetic establishes that these powers produce a unique factorization\footnote{For distinct primes $p \neq q$ and positive integers $m,n$, the equality $p^m = q^n$ is impossible, since by the Fundamental Theorem of Arithmetic the prime factorization of an integer is unique.}.

There are two exceptions for this rule: the rules that map to $p^{17^{\mathit{var.\ usage\ exp.}}}$, and the rules that feature $p^{43} \times \mathit{Cert}_L(S) \times \dots$ in the right-hand side: these exponents, $17^{\mathit{var.\ usage\ exp.}}$ and $43$ appear multiple times on the right side of the rules in Figure~\ref{fig_lowLevelCert}.

The argument for the first exception, $p^{17^{\mathit{var.\ usage\ exp.}}}$, relies on the fact that the exponent $\mathit{var.\ usage\ exp.}$ is also a power of primes that will produce different numbers depending on the applied rule, such that $C_L(x) ^ {2 ^ {1 + \mathit{offset}}} \neq C_L(x)^{3 ^ {C_L(y)}}$.
$C_L(x)$ (and, similarly, $C_L(y)$) is the variable prime that uniquely represents the variable $x$ (or $y$), and $1 + \mathit{offset}$ is also strictly positive.
The result follows from the observation that there are no values for which $2 ^ {1 + \mathit{offset}} = 3 ^ {C_L(y)}$.
As for the second exception, there are three rules where $p^{43}$ is featured:
\begin{enumerate}
\item $S; \ \mathtt{JZ} \ \mathtt{R}_S \ \mathit{len}(S'); \ S'$
\item $S; \ \mathtt{JZ} \ \mathtt{R}_S \ \mathit{len}(S'); \ S'; \ \mathtt{JZ\ zero} \ \mathit{len}(S''); \ S''$
\item $S; \ \mathtt{JZ} \ \mathtt{R}_S \ \mathit{len}(S'); \ S'; \ \mathtt{JZ\ zero} \ - (\mathit{len}(S) + \mathit{len}(S') + 1) $
\end{enumerate}
Even though all of the factors produced by (1) are also produced by (2), the latter is the only rule that can add a factor with exponent $59$.
Similarly, (3) is the only rule that can add factors with exponents $61$ and $67$.
\end{proof}

\begin{theorem}[Uniqueness of Low-level Programs]
\label{thm_uniquenessLowLevel}
Let $P_1$ and $P_2$ be two programs in the low-level \textsc{CharonIR} intermediate representation.
Then $\mathit{Cert}_L(P_1) = \mathit{Cert}_L(P_2)$ if and only if $P_1 = P_2$.
\end{theorem}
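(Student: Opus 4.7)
The plan is to prove each direction separately. The forward direction, $P_1 = P_2 \Rightarrow \mathit{Cert}_L(P_1) = \mathit{Cert}_L(P_2)$, is immediate since $\mathit{Cert}_L$ is well-defined as a function: identical inputs yield identical outputs. All the work lies in the reverse direction, and I would prove this by showing that $\mathit{Cert}_L(P)$ uniquely determines $P$, proceeding by structural induction on low-level programs and relying on Lemma~\ref{lm_uniquenessLowLevelFactor} together with the Fundamental Theorem of Arithmetic.

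First, I would observe that $\mathit{Cert}_L(P)$ is always a product $\prod_{i=1}^{n} p_i^{e_i}$ where $p_1 < p_2 < \ldots < p_n$ are consecutive primes starting at $2$ and each $e_i > 0$. By the Fundamental Theorem of Arithmetic, this factorization is unique, so the sequence of exponents $(e_1, \ldots, e_n)$ is recoverable from the certificate. Next, I would appeal to Lemma~\ref{lm_uniquenessLowLevelFactor}: each exponent $e_i$ (or small group of consecutive exponents, in the case of structural markers) matches the right-hand side of exactly one rule in Figure~\ref{fig_lowLevelCert}, and determines the specific instruction or construct that produced it, including the operands via the factorization of $\mathit{var.\ usage\ exp.}$ and $\mathit{func.\ def.\ exp.}$.

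The inductive step proceeds over the grammar of \textsc{CharonIR} constructs. For leaf constructs (\texttt{CONSTANT}, variable loads, arithmetic operations), Lemma~\ref{lm_uniquenessLowLevelFactor} gives the result directly: each single factor identifies the instruction and its operands through the corresponding variable/function primes, which are themselves unique by Lemma~\ref{lm_variablePrimes} applied at the low-level side. For compound constructs (conditionals, while loops, function definitions), the explicit start/end markers ($43$, $47$, $53$, $59$, $61$, $67$, $31^{\cdots}$, $37$) serve as unambiguous delimiters: given the ordered list of exponents, the outermost delimiter pair can be located, and the exponents strictly inside form a sub-certificate to which the induction hypothesis applies. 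Concatenation of statements, $\mathit{Cert}_L(S;S') = \mathit{Cert}_L(S) \times \mathit{Cert}_L(S')$, splits cleanly because each sub-certificate occupies a contiguous range of positional primes and, by induction, corresponds to a unique sub-program.

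The main obstacle is showing that the decomposition into sub-certificates is unambiguous for nested control flow, since the body of an \textbf{if} or \textbf{while} may itself contain nested conditionals whose markers reuse the exponents $43$, $47$, $53$, etc. I would resolve this by a standard balanced-delimiter argument: since every opener ($47$, $61$) emitted by the rules of Figure~\ref{fig_lowLevelCert} is paired with exactly one closer ($53$/$59$, $67$) at the same nesting depth, the matching pair is determined by scanning the exponent sequence and tracking depth. This combined with Lemma~\ref{lm_uniquenessLowLevelFactor} recovers the complete program structure, completing the induction and hence the theorem.
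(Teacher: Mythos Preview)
Your proposal is correct and follows essentially the same route as the paper: the forward direction is dismissed by determinism of $\mathit{Cert}_L$, and the reverse direction proceeds by induction, invoking Lemma~\ref{lm_uniquenessLowLevelFactor} for the base constructs and then decomposing compound constructs via their sub-certificates. One small slip: the uniqueness of variable and function primes within $C_L$ is not what Lemma~\ref{lm_variablePrimes} states (that lemma only shows $C_H(a)=C_L(T(a))$); the uniqueness you need comes directly from the fresh-prime construction in Figure~\ref{fig_varAndFuncPrime}, so just cite that instead.
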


\begin{proof}
The \textit{if} case, $\mathit{Cert}_L(P_1) = \mathit{Cert}_L(P_2)$ if $P_1 = P_2$, is trivial because $\mathit{Cert}_L$ is deterministic.
Therefore, passing the same certificate to $\mathit{Cert}$ yields the same result.

To prove \textit{$\mathit{Cert}_L(P_1) = \mathit{Cert}_L(P_2)$ only if $P_1 = P_2$}, we will use structural induction on $n$, where $n = \mathit{Cert}_L(P_1) = \mathit{Cert}_L(P_2)$.
We proceed by analyzing the right side of each rule in Figure~\ref{fig_lowLevelCert}.

\textbf{Base case:} The certification of constants, variables definitions, and variable usage do not cause the recursive invocation of $\mathit{Cert}_L$ on subprograms.
Thus, the equality of these constructs in programs $P_1$ and $P_2$ and can be demonstrated by applying the result of Lemma~\ref{lm_uniquenessLowLevelFactor} without induction.

\begin{description}
\item [Constant:] From Lemma~\ref{lm_uniquenessLowLevelFactor}, $p^{11^{\mathit{c} + 1\ \mathit{if\ c} \geq 0\ \mathit{else}\ c}}$ is the only factor that encode the instruction \texttt{CONSTANT R\_c c}.
As this formula is injective, the equality of certificates implies the same \texttt{CONSTANT} instruction appears in the same position in both programs.

\item [Variable definition:] A variable definition is encoded from an address in \texttt{data} section of a program.
From Lemma~\ref{lm_uniquenessLowLevelFactor}, it is solely certificated by $p ^ {13^{\mathit{type\ symbol}(\mathit{var_{address}})}}$, where $\mathit{type\ symbol}$ encodes the inferred type of this variable.
If both programs have the same certificate, then the same type has been inferred for the same address of \texttt{data}.
This implies programs $P_1$ and $P_2$ declare the exact same variables, defined at the same relative position.

\item [Variable usage:] The variable usage certificate follows the formula $\mathit{p}^{17^{var.\ usage. exp.}}$.
The exponent, $\mathit{var.\ usage\ exp.}$, will vary depending on the variable usage construct it encodes:

\begin{description}
\item [Variable usage with static offset:] This case is implemented in low-level from a sequence of instructions that will load the variable base address into a temporary register, compute the static offset, and add it to the base address.
Lemma~\ref{lm_uniquenessLowLevelFactor} establishes that this is the only case that produces $\mathit{var.\ usage\ exp.} = C_L(\mathit{x})^{2^{1 + \mathit{offset}}}$, where $\mathit{x}$ is the variable's base memory address, $C_L(\mathit{x})$ is its variable prime in the certification environment, and \textit{offset} is its static memory offset.

\item [Variable usage with dynamic offset:] The other case for the variable usage construct also starts by loading the base memory address into a temporary register.
Unlike the previous case, it will load the contents of another variable into another temporary register, and have memory address arithmetic instructions to compute the address of interest in runtime.
Lemma~\ref{lm_uniquenessLowLevelFactor} tells the exponent can only be $\mathit{var.\ usage\ exp.} = C_L(\mathit{x})^{3^{C_L(\mathit{y})}}$, where $C_L(\mathit{x})$ and $C_L(\mathit{y})$ are the variable primes of the variable being used and of the variable to dynamically compute the memory offset with.
$\mathit{x}$ is the base memory address.
\end{description}

If the certificates are identical, then the same sequence of instructions must be present in the same position in both programs, the variables must use the same addresses, and any offsets from the base memory addresses must be identical, in either variable construct the programs implement.
Otherwise, there would be at least one differing variable usage certificate.
\end{description}

\textbf{Inductive step:}
We shall analyze one case, as the others will be similar.
We consider the compound programs $P_1 = S_1; S'_1$.
We have that $\mathit{Cert}_L(P_1) =  \mathit{Cert}_L(S_1) \times \mathit{Cert}_L(S'_1)$.
We have that $\mathit{Cert}_L(S_1) < \mathit{Cert}_L(P_1)$ and
$\mathit{Cert}_L(S'_1) < \mathit{Cert}_L(P_1)$; thus, we can apply induction.
By induction, we assume that the programs $S_1$ and $S_2$ are unique.
Therefore, it follows that if there exists $P_2$ such that $\mathit{Cert}_L(P_1) = \mathit{Cert}_L(P_2)$, then $P_2 = S_1; S'_1$.
A similar result can be derived from the other recursive rules.
\end{proof}

\subsubsection{Correctness of Invertibility}
\label{sss_correctnessInvertibility}

The canonical form is constructed from the certificate produced by either $\mathit{Cert}_H$ or $\mathit{Cert}_L$, as discussed in Section~\ref{sub_canonical}.
The program produced by $\mathit{Canon}$ captures the logical structure implied by the certificate, independent of the specific syntactic choices made in the source or compiled program.
If two programs yield the same certificate, then they must share the same canonical form.
This property is demonstrated in Theorem~\ref{thm_uniquenessCanon}.

\begin{theorem}[Uniqueness of Canonical Form]
\label{thm_uniquenessCanon}
Let $P_{1}$ and $P_{2}$ be two programs in the high-level \textsc{CharonLang} representation.
Then $\mathit{Canon}(\mathit{Cert}_H(P_{1})) = \mathit{Canon}(\mathit{Cert}_H(P_{2}))$ if and only if $\mathit{Cert}_H(P_{1}) = \mathit{Cert}_H(P_{2})$.
\end{theorem}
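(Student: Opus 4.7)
The plan is to handle the two directions of the biconditional separately. The ``if'' direction is immediate: $\mathit{Canon}$ is a deterministic algorithm (Algorithm~\ref{algo_inv_structural} contains no nondeterministic choices once the factorization is fixed, which is itself unique by the Fundamental Theorem of Arithmetic), so feeding the same certificate to it yields the same canonical program. For the ``only if'' direction, my strategy is to reduce the claim to a right-inverse lemma: for every certificate $n$ lying in the image of $\mathit{Cert}_H$, one has $\mathit{Cert}_H(\mathit{Canon}(n)) = n$. Granting this lemma, if $\mathit{Canon}(\mathit{Cert}_H(P_1)) = \mathit{Canon}(\mathit{Cert}_H(P_2))$, then applying $\mathit{Cert}_H$ to both sides and using the lemma yields $\mathit{Cert}_H(P_1) = \mathit{Cert}_H(P_2)$ directly.

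The right-inverse lemma itself would be proved by structural induction on the ordered list of prime-power factors produced by stage (1) of $\mathit{Canon}$. The rules of Algorithm~\ref{algo_inv_structural} were defined as the case-by-case inverses of the rules in Figure~\ref{fig_highLevelCert}: each factor $p^{e}$ consumed by $\mathit{Canon}$ reconstructs exactly the syntactic construct that $\mathit{Cert}_H$ associates with the exponent $e$, emitted at the positional prime $p$. Since $\mathit{Canon}$ preserves the relative order of the emitted constructs, and since the post-order traversal performed by $\mathit{Cert}_H$ reassigns positional primes in that same order, the two multiplicative products line up factor-by-factor. The base cases (constants, variable usages, and operation symbols) and the inductive cases (compound statements, conditionals, loops, function definitions, calls) follow directly by matching the corresponding rules on the two sides.

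The main obstacle will be the variable and struct reconstruction, because $\mathit{Canon}$ makes canonical choices that are not present in the source program. Three points need care. First, $\mathit{Canon}$ renames variables to \texttt{var\_1}, \texttt{var\_2}, \ldots in declaration order; I must argue that when $\mathit{Cert}_H$ runs over the canonical program, the environment $C_H$ (populated by the rules of Figure~\ref{fig_varAndFuncPrime}) assigns to \texttt{var\_n} precisely the $n$-th variable prime $\mathit{vp}_n$, which is the same prime originally appearing in $n$. This holds because $\mathit{Canon}$ emits variable declarations in the exact order dictated by the $13^{\ldots}$ and $19^{\ldots}$ factors, and because every variable it declares is by construction used at least once. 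The same argument applies to function primes and the renaming of the last defined function to \texttt{main}. Second, the decision between declaring a compound variable as a struct or as an array is driven by whether its usage factors have the form $C_H(x)^{2^{1+\mathit{offset}}}$ (static) or $C_H(x)^{3^{C_H(y)}}$ (dynamic); this is precisely the distinction made in the usage rules of Figure~\ref{fig_highLevelCert}, so re-certifying the reconstructed form yields the same $\mathit{var.\ usage\ exp.}$ values. Third, struct fields marked \texttt{\_\_unknown\_type\_\_} must re-certify to the same type-symbol tower as in the original certificate: this follows because $\mathit{Canon}$ preserves the number and relative position of all type symbols in the variable-definition exponent, merely labelling unused ones with the reserved symbol $7$ that was already present in the original.

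Once these bookkeeping points are established, the inductive step for each compound construct is a routine rewriting exercise, and the theorem follows from the right-inverse lemma as sketched above.
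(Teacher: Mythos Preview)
Your proposal is correct but takes a genuinely different route from the paper. Both of you dispose of the ``if'' direction by determinism of $\mathit{Canon}$. For the ``only if'' direction, the paper argues injectivity of $\mathit{Canon}$ directly: it inducts on the length of the factor list $F$ passed to \textsc{CanonRec} and shows, case by case (constants, variable definitions, variable usages, operations), that if two invocations of \textsc{CanonRec} with identical accumulated state produce identical outputs, then the current head factors $f_k$ and $f_k'$ must themselves be equal, hence $F = F'$ and the certificates coincide. You instead establish a strictly stronger intermediate result, the right-inverse identity $\mathit{Cert}_H(\mathit{Canon}(n)) = n$ on the image of $\mathit{Cert}_H$, and obtain injectivity as a one-line corollary. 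Your route is more conceptually faithful to the design of $\mathit{Canon}$ as an inverse, and it actually supplies a proof for the claim made informally in Section~\ref{sub_canonical} that ``the canonical form $P_c$ has the same certificate as the program $P$ it was reconstructed from,'' which the paper asserts but never proves. The paper's route is leaner in that it reasons only about single-step state updates of \textsc{CanonRec} and never has to re-run $\mathit{Cert}_H$ on the reconstructed program; your route trades that economy for a reusable lemma. One point of caution: your third bookkeeping item assumes that unused struct or array slots already carry the reserved symbol $7$ in the original certificate, which is consistent with Example~\ref{ex_canonical} and necessary for your round-trip to close, but is not explicitly stated in the $\mathit{Cert}_H$ rules of Figure~\ref{fig_highLevelCert}; you would need to make that assumption explicit (or derive it) for the right-inverse lemma to be fully justified.
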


\begin{proof}
The proof for both directions is presented below: \\

$\Leftarrow$ If $\mathit{Cert}_H(P_1) = \mathit{Cert}_H(P_2)$, then $\mathit{Canon}(\mathit{Cert}(P_1)) = \mathit{Canon}(\mathit{Cert}(P_2))$.

The proof follows directly from the determinism of the $\mathit{Canon}$ algorithm (Algorithm~\ref{algo_inv_structural}).
As the inputs are identical, then it will produce the same result in both cases. \\

$\Rightarrow$ If $\mathit{Canon}(\mathit{Cert}(P_1)) = \mathit{Canon}(\mathit{Cert}(P_2))$, then $\mathit{Cert}_H(P_1) = \mathit{Cert}_H(P_2)$.

Let $\mathit{Cert_H(P_1)} = n_1$ and $\mathit{Cert_H(P_2)} = n_2$, and
$n_1 = f_1 \times f_2 \times \ldots$ and $n_2 = f_1' \times f_2' \times \ldots$.
We have that $F = [f_1, f_2, \ldots, ]$ and $F' = [f_1', f_2', \ldots, ]$, where $F$ (and $F'$) is the list of factors that \textsc{Canon} (Algorithm~\ref{algo_inv_structural}) passes to \textsc{CanonRec} as the first argument.
The proof follows by induction on the length of $F$.
We shall demonstrate that $F = F'$ so that these two invocations of \textsc{CanonRec} produce the same program.
On the base case, we have that $F = [] = F'$.
Thus, $\textsc{CanonRec}([], \mathit{empty\ program}, [], 0) = \textsc{CanonRec}([], \mathit{empty\ program}, [], 0)$, and it follows that $\mathit{empty\ program} = \mathit{empty\ program}$.

On the inductive case, we assume that the theorem holds up to a list of factors of length $k$.
Thus, we need to show that $\textsc{CanonRec}([f_k, \ldots], P_k, S_k, v_k) = \textsc{CanonRec}([f_k', \ldots], P_k', S_k', v_k')$.
We assume, by induction, that $P_k = P_k'$, $S_k = S_k'$ and $v_k = v_k'$.
In what follows, we analyze some of the cases that Algorithm~\ref{algo_inv_structural} handles.

\begin{description}
\item [Constant:] Assume $\textsc{CanonRec}([f, \ldots], P, S, v) = \textsc{CanonRec}([f', \ldots], P, S, v)$.
If the factor $f$ is the certificate of a constant, then it has the exponent $e = 11^{c}$.
The only way the equality can hold is if the factor $f'$ has the exponent $e' = 11^{c'}$, where $c = c'$, following \textbf{else if} \textsc{IsConstant}.

\item [Variable definition:] Assume $\textsc{CanonRec}([f, \ldots], P, S, v) = \textsc{CanonRec}([f', \ldots], P, S, v)$.
If $f$ encodes the definition of a variable, it has an exponent in the form $e = 13^{\mathit{var.\ def.\ exp.}}$.
The equality will only hold if $f'$ also encodes a variable definition with an exponent $e' = 13^{\mathit{var.\ def.\ exp.'}}$, so \textsc{CanonRec} will parse $f$ and $f'$ with the case \textbf{if} \textsc{IsVarDef}.
Additionally, it is necessary that $\mathit{var.\ def.\ exp.} = \mathit{var.\ def.\ exp.'}$ for the types $\tau$ and $\tau'$ the algorithm extracts with \textsc{GetVarType} to be identical.
The equality between $\tau$ and $\tau'$ is required for \textsc{CanonRec} to add the same global variable, or struct with the same fields, or array with the same length to the canonical program being reconstructed.

\item [Variable usage:] Assume $\textsc{CanonRec}([f, \ldots], P, S, v) = \textsc{CanonRec}([f', \ldots], P, S, v)$.
If $f$ is a factor that encodes the usage of a variable, then its exponent must be $e = 17^{\mathit{var.\ usage\ exp.}}$.
For the equality to hold, $f'$ must also have a variable usage encoding exponent, $e' = 17^{\mathit{var.\ usage\ exp.'}}$, so it will fall into the same case --- \textbf{else if} \textsc{IsVarUsage}.
In particular, $f$ and $f'$ must encode the usage of the same variable, such that $\mathit{var.\ usage\ exp.} = \mathit{var.\ usage\ exp.'}$, and the same variable prime $\mathit{vp}$ can be retrieved from it.
This is mandatory for \textsc{CanonRec} to obtain the same variable name from \textsc{GetVarNameFromVarUsage} --- this method will return the name \texttt{var\_n}, where $\mathit{n}$ is the index of the variable prime $\mathit{vp}$ in the ordered set of prime numbers.

\item [Operation:] Assume $\textsc{CanonRec}([f, \ldots], P, S, v) = \textsc{CanonRec}([f', \ldots], P, S, v)$.
If $f$ is the certificate of an operation, then \textsc{CanonRec} will parse it with the \textbf{else if} \textsc{IsOperation} case.
It is required that $f$ and $f'$ encode the same operation for the equality to hold, so \textsc{GetOperationFromExp} will return the same symbol in both cases.

\end{description}

Thus, we have that if $\textsc{CanonRec}(F, \mathit{empty\ program}, [], 0) = \textsc{CanonRec}(F', \mathit{empty\ program}, [], 0)$, then $F = F'$, and therefore, $\mathit{Cert}_H(P_1) = \mathit{Cert}_H(P_2)$.
\end{proof}

The canonical form is also preserved by the compilation process.
In other words, if two programs $P_1$ and $P_2$ share the same canonical form, then they compile to the same low-level program $P_L$.
Theorem~\ref{thm_lowLevelCanon} proves this property.

\begin{theorem}[Preservation of Canonical Forms]
\label{thm_lowLevelCanon}
Let $P_1$ and $P_2$ be two programs such that $\mathit{Canon}(\mathit{Cert}_H(P_1)) = \mathit{Canon}(\mathit{Cert}_H(P_2))$.
In this case, $\mathit{Comp}(P_1) = \mathit{Comp}(P_2)$.
\end{theorem}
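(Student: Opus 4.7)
The plan is to prove this by chaining the three preceding results, since each of them transports equality across one of the layers (canonical form $\to$ high-level certificate $\to$ low-level certificate $\to$ compiled program). No new induction should be necessary.

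First I would invoke Theorem~\ref{thm_uniquenessCanon} in the ``only if'' direction: from the hypothesis $\mathit{Canon}(\mathit{Cert}_H(P_1)) = \mathit{Canon}(\mathit{Cert}_H(P_2))$ we immediately get $\mathit{Cert}_H(P_1) = \mathit{Cert}_H(P_2)$. Next, I would apply Theorem~\ref{thm_certificationPreservation} twice, once for each program, to rewrite both sides as low-level certificates: $\mathit{Cert}_H(P_i) = \mathit{Cert}_L(\mathit{Comp}(P_i))$ for $i \in \{1,2\}$. Combining these two steps yields $\mathit{Cert}_L(\mathit{Comp}(P_1)) = \mathit{Cert}_L(\mathit{Comp}(P_2))$. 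Finally, I would invoke Theorem~\ref{thm_uniquenessLowLevel}, which says that $\mathit{Cert}_L$ is injective on \textsc{CharonIR} programs, to conclude $\mathit{Comp}(P_1) = \mathit{Comp}(P_2)$.

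The only real issue I anticipate is making sure the chain is actually applicable, not the algebra itself. In particular, Theorem~\ref{thm_uniquenessCanon} is stated for \textsc{CharonLang} programs, Theorem~\ref{thm_certificationPreservation} ranges over arbitrary \textsc{CharonLang} programs, and Theorem~\ref{thm_uniquenessLowLevel} ranges over \textsc{CharonIR} programs. Since $\mathit{Comp}$ maps \textsc{CharonLang} to \textsc{CharonIR}, the types match, and every step of the chain is a direct rewriting using a previously established equivalence. Thus the proof reduces to a short three-line argument, and I do not expect any step to require additional case analysis or induction on the syntax of $P_1$ and $P_2$.
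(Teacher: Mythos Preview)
Your proposal is correct and follows essentially the same chaining argument as the paper. The paper's proof cites only Theorem~\ref{thm_uniquenessCanon} and Theorem~\ref{thm_uniquenessLowLevel}, leaving the bridge from $\mathit{Cert}_H$ to $\mathit{Cert}_L$ implicit; your explicit use of Theorem~\ref{thm_certificationPreservation} makes that middle step precise, so your version is if anything slightly more complete.
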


\begin{proof}
Theorem~\ref{thm_uniquenessCanon} demonstrates that if two programs $P_1$ and $P_2$ share the same canonical form, then they must have the same certificate.
From this property, we conclude that $\mathit{Cert}_H(P_1) = \mathit{Cert}_H(P_2)$ because $\mathit{Canon}(\mathit{Cert}_H(P_1)) = \mathit{Canon}(\mathit{Cert}_H(P_2))$, from the hypothesis.

Theorem~\ref{thm_uniquenessLowLevel}, on the other hand, shows that two programs with the same certificate will compile to the same low-level program.
As we've learned that $\mathit{Cert}_H(P_1) = \mathit{Cert}_H(P_2)$ from the previous step, then $\mathit{Comp}(P_1) = \mathit{Comp}(P_2)$.
\end{proof}

\section{Evaluation}
\label{sec_eval}

This section explores the following research questions:

\begin{enumerate}
\item \textbf{RQ1:}
How does the certificate length scale with the size of the high- and low-level source programs (AST nodes and instructions)?

\item \textbf{RQ2:}
 How does the running time of the high- and low-level certification algorithms grow with program size?
\end{enumerate}

\paragraph{Hardware/Software}
Experiments evaluated in this section were performed on an Apple M1 CPU, featuring 8 GB of integrated RAM, clock of 3.2 GHz, and banks of 192 KB and 128 KB L1 caches --- four of each.
The experimental setup runs on macOS Sequoia 15.4.1.
The \textsc{Charon} project is implemented in Python version 3.12.4 (Jun'24).

\paragraph{Benchmarks}
The GitHub repository of the \textsc{Charon} project\footnote{\url{https://github.com/guilhermeolivsilva/project-charon}} contains 20 programs written in \textsc{CharonLang}. Each program exercises different features of the high-level language. These programs vary in size, which can be measured in terms of the number of statements, AST nodes, and instructions. This diversity allows us to evaluate the asymptotic behavior of the certification technique proposed in this paper. The corresponding results are summarized in Figures~\ref{fig_certificateLength} and~\ref{fig_certificateRunningTime}, and will be discussed in the remainder of this section.

\subsection{RQ1: On the Size of Certificates}
\label{sub_eval_certificate_size}

The compilation certificate is represented as a number whose magnitude grows exponentially with the quantity of constructs in the target program, since each additional construct introduces a new multiplicative factor.  
However, the certification process does not require computing or storing the exact value of this number. Instead, the certificate is maintained in its symbolic form; that is, as a string representing the sequence of multiplicative expressions used to construct it.  
Therefore, we measure the size of a certificate based on its symbolic representation, defined as the character length of its string form, rather than by the numeric magnitude of the final value.
In this section, we show that although the absolute value of a certificate grows exponentially, its symbolic representation grows linearly with the size of the program.

\begin{figure}[ht]
\centering
\includegraphics[width=1\textwidth]{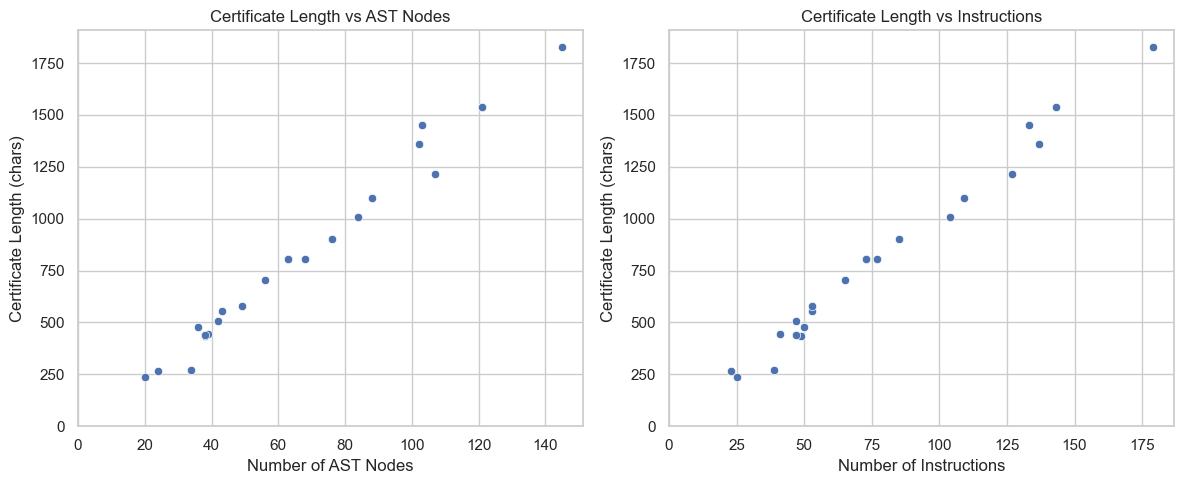}
\caption{Certificate expression length (in characters) vs program size.}
\Description{Certificate expression length (in characters) vs program size.}
\label{fig_certificateLength}
\end{figure}

\paragraph{Discussion}  
Figure~\ref{fig_certificateLength} illustrates the relationship between program size—measured by the number of AST nodes—and certificate length.  
The size of the certificates grows linearly with the size of the input program, both in terms of high-level AST nodes and low-level instructions.  

In both cases, the linear correlation is very strong: the Pearson correlation coefficient ($R^2$) between the number of AST nodes and the certificate length is 0.98. Similarly, the $R^2$ value between the length of certificates and the number of instructions in the \textsc{CharonIR} representation is 0.97.  
This linear trend reflects the design of the certification rules, where each program construct contributes a constant-size component to the overall certificate expression.

\subsection{RQ2: On the Performance of the Certifier}
\label{sub_eval_performance}

We evaluate algorithmic performance by measuring the average running time of the certification procedures across a suite of test programs, using Python's \texttt{\%\%timeit} for high-resolution timing.
Our current implementation, written in Python, is not optimized for performance and inherits the inefficiencies associated with the language.  
However, in this section, we demonstrate that the asymptotic complexity of the certification process grows linearly with the size of the input program.  
This suggests that a reimplementation in a systems-oriented language such as C, C++ or Rust, would likely result in significantly faster certification times.

\begin{figure}[ht]
\centering
\includegraphics[width=1\textwidth]{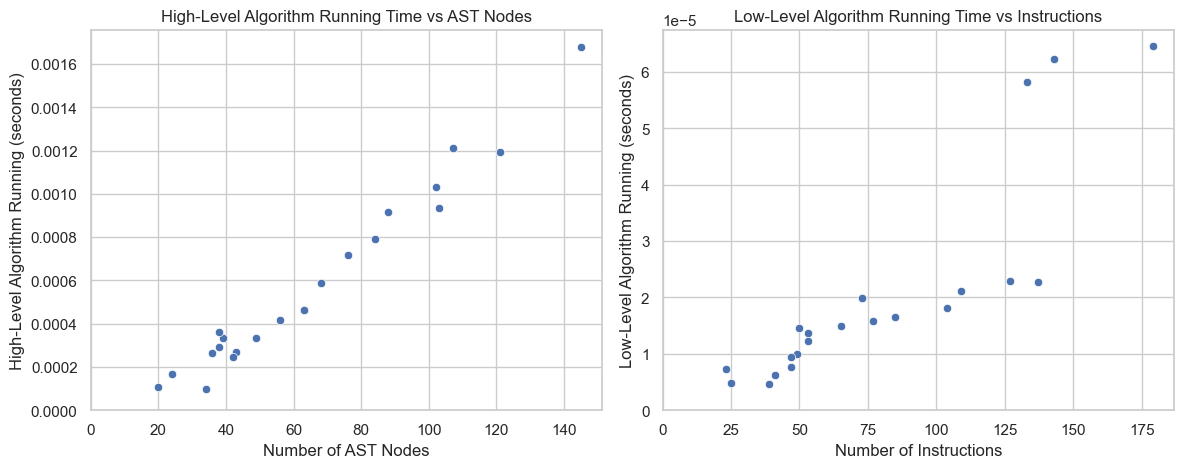}
\caption{Running time (in seconds) vs program size.}
\Description{Running time (in seconds) vs program size.}
\label{fig_certificateRunningTime}
\end{figure}

\paragraph{Discussion}
Figure~\ref{fig_certificateRunningTime} presents the correlation between the running time of the certification process and the size of the certified programs.
The left-hand plot shows the performance of the high-level algorithm (as a function of AST nodes), and the right-hand plot analyzes the low-level algorithm (as a function of instruction count).
The high-level certification algorithm runs in sub-millisecond time across the entire benchmark suite.
As shown in Figure~\ref{fig_certificateRunningTime} (left), the runtime scales linearly with the number of AST nodes, reaching approximately 1.6 ms for the largest inputs, which contains 145 nodes.
The coefficient of correlation between running time and size is 0.96; hence, very close to a perfectly linear correlation.
This behavior confirms the expected complexity of the algorithm, which performs a recursive descent over the abstract syntax tree with constant-time processing at each node.

The low-level certification algorithm, shown in Figure~\ref{fig_certificateRunningTime} (right), exhibits even faster execution times, typically in the tens of microseconds.  
It processes the instruction stream through a sequential scan based on pattern matching.
Although the low-level certification procedure demonstrates near-linear behavior in practice ($R^2 = 0.93$), it is not guaranteed to run in linear time.  
In particular, the algorithm may exhibit quadratic behavior in the presence of deeply nested branches.
This super-linear behavior is observable in our results: the three largest programs in the benchmark suite deviate from the linear trend.  
These outliers contain several control-flow constructs (e.g., \texttt{if} and \texttt{while} statements), which trigger multiple executions of the \textit{Analyze-Conditional-Jumps} pass.

\section{Related Work}
\label{sec_rw}

This paper introduces what we call ``{\it Structural Certification}'', a certification methodology in which the layout of the binary itself encodes evidence of its derivation from the source code.  
We believe this perspective is novel when compared to prior work.  
Nevertheless, as mentioned in Section~\ref{sec_introduction}, there exists a substantial body of research that shares our goal: building trust between users and compilers.  
In the remainder of this section, we review several techniques aligned with this objective.

\subsection{Diverse Double-Compiling}
\label{sub_ddc}

Diverse Double-Compiling (DDC)~\cite{Wheeler05,Wheeler10}, along with its many adaptations~\cite{Holler15,Skrimstad18,Somlo20,Drexel25}, is a defense technique proposed by \citeauthor{Wheeler05} in his PhD dissertation.  
It seeks to detect malicious behavior introduced by a compromised compiler by comparing the outputs of independently built compiler binaries.  
The method proceeds in four steps:

\begin{enumerate}
  \item Compile the source code of the compiler $C$ under test using a trusted compiler $C_t$, producing an executable compiler $X_t = C_t(C)$.
  \item Compile the same source using the suspect compiler $C_s$, producing an executable compiler $X_s = C_s(C)$.
  \item Use both resulting compiler executables to compile a test program $p$.
  \item Compare the resulting binaries, i.e., $b_t = X_t(p)$ and $b_s = X_s(p)$.
\end{enumerate}

The key insight is that a trusted compiler $C_t$, assumed to be free of malicious logic, should produce a clean version $X_t$ of the compiler $C$. 
If the executables $X_t$ and $X_s$ produce different binaries in Step 3, this discrepancy can be detected through syntactic comparison in Step 4.  

DDC serves as a practical defense against self-propagating attacks such as Thompson’s ``Trusting Trust'' backdoor.
However, it relies on two assumptions~\cite{Carnavalet14}: (i) the existence of at least one trusted compiler $C_t$, and (ii) that both $X_t$ and $X_s$ produce deterministic outputs for the same input.  
Furthermore, any differences observed in Step 4, such as $b_t \neq b_s$, require additional investigation to determine whether they stem from benign implementation differences or intentional tampering.

This paper and Diverse Double-Compiling (DDC) share the same goal: increasing confidence that a compiler is not inserting malicious code into its output.
However, they belong to different categories. DDC is a comparative technique: it detects discrepancies by building the same compiler with multiple toolchains and comparing outputs.
The paper's approach, instead, establishes a structural correspondence (a morphism) between two languages: the high-level source language and the low-level target language.
In this sense, the two techniques are complementary: the compiler described in the paper could be used as one of the compilers in a DDC setup.

\subsection{Lightweight Certification Approaches}
\label{sub_lightweight}

Lightweight certification techniques, such as Proof-Carrying Code (PCC)~\cite{Necula97} and Typed Assembly Language (TAL)~\cite{Morrisett99}, enable validation of binary code without requiring trust in the compiler.  
They do so by embedding verifiable metadata either alongside or within the compiled program.  
In PCC, the compiler generates a logical proof that the binary satisfies a predefined safety policy; this proof is distributed with the code and checked by the consumer prior to execution.  
TAL follows a similar approach, embedding type information directly in the assembly code to guarantee properties such as control-flow correctness and memory safety.  

Our work shares the same high-level goal; that is, allowing untrusted code to be verified independently; however, it adopts a different methodology.  
Rather than attaching external proofs or annotations to the code, we encode a certificate of correct compilation directly into the structure of the binary, using a G\"{o}del-style numbering schema.  
This certificate is not an auxiliary artifact.
It is instead embedded into the form of the binary code itself, enabling verification through basic arithmetic checks.  
In contrast to PCC and TAL, which require logical or type-based reasoning engines, our approach supports self-contained, solver-free certification by leveraging the regularity and reversibility of the compilation process.

\subsection{Symbolic Certification}
\label{sub_symbolic}

Symbolic certification~\cite{Barthe05,Rival04,Govereau12,Zaks08,Kirchner05} refers to techniques that verify the correctness of compiled code by constructing and comparing symbolic representations of program semantics at both the source and binary levels, ensuring that they exhibit behaviorally equivalent execution.  
While sharing a similar goal to our approach, symbolic certification is technically distinct and offers a different set of guarantees.  
We illustrate this distinction by analyzing the work of \citet{Rival04}.

In \citeauthor{Rival04}'s framework, the correctness of compilation is established through the construction of \textit{symbolic transfer functions} (STFs), which encode the semantics of source and binary code in a compositional manner.  
Correctness is then verified by demonstrating that these symbolic descriptions are behaviorally equivalent.  
This approach has two advantages over ours: (i) it supports optimized compiler backends, and (ii) it enables the verification of semantic properties such as the absence of runtime errors or the preservation of invariants.

By contrast, our technique encodes a certificate of correct compilation directly within the binary.
Instead of relying on symbolic reasoning or logical inference, we use a numeric encoding schema that maps constructs in the source program to patterns in the compiled binary.  
This yields a form of \textit{certification by structure}, which, unlike symbolic certification, requires no external proofs, solvers, or trusted verifiers.  
Although symbolic techniques like that of \citeauthor{Rival04} offer broader expressiveness and can validate aggressive compiler optimizations, they still ask for trust in the verification infrastructure.  
In contrast, our method enables a self-contained form of certification, where correctness can be checked arithmetically from the structure of the program alone.

\section{Conclusion}
\label{sec_conclusion}

This paper has introduced a new certification methodology, which we call \textit{structural certification}.
In this approach, the certificate of correct compilation is inherently embedded in the structure of the binary code itself.
Unlike previous techniques, such as proof-carrying code, translation validation, or verified compilation, which ask for trust in external artifacts, our approach enables certification through the form and layout of the compiled program.
Inspired by G\"{o}del's numbering system, our method encodes syntactic and semantic properties of the source program directly into the binary using arithmetic encodings derived from prime factorizations.
This design enables lightweight verification that requires neither theorem provers nor trusted third-party verifiers.
We have demonstrated the feasibility of this idea through the development of \textsc{Charon}, a compiler for a subset of C capable of certifying programs written in \textsc{FaCT}.
By embedding certification directly into the binary representation, our method opens a new path for certifying compilers---one in which the evidence of correct compilation is inseparable from the code it certifies.

\paragraph{Limitations}
While structural certification offers a novel and lightweight approach to trusted compilation, it comes with limitations.
Most notably, the current method assumes a non-optimizing, syntax-directed compiler, where each source construct maps deterministically to a corresponding binary pattern.
This limitation restricts support for aggressive low-level optimizations, such as instruction reordering or register allocation.
These transformations, if applied on the low-level code would invalidate the numeric correspondence required by our encoding.
Thus, programs certified via our approach must be optimized at the source-code level.
Moreover, because the certification relies on the preservation of positional and syntactic structure, it may not scale naturally to full-featured programming languages or to target architectures with highly complex instruction sets.
Finally, our method verifies structural fidelity between source and binary, but does not presently capture deeper semantic properties, such as memory safety or functional correctness.

\paragraph{Future Work}
These limitations suggest several directions for future research.
One avenue is to extend the structural encoding to tolerate limited classes of compiler optimizations, perhaps by generalizing the encoding to sets of acceptable patterns or by embedding structured metadata alongside numeric codes. 
Another possibility is to enrich the certification mechanism to express and verify semantic properties, such as safety or constant-time guarantees, while retaining the lightweight, arithmetic-checkable format.
Additionally, it would be valuable to explore how structural certification interacts with techniques like diverse double-compiling or translation validation, potentially combining their strengths to build higher-assurance compilation pipelines.
Finally, further work is needed to assess the robustness of structural certification in adversarial settings, where attackers may attempt to tamper with or forge encodings in compiled binaries.

\paragraph{Software}
The certifying infrastructure described in this paper is available at
\url{https://github.com/guilhermeolivsilva/project-charon}.

\section*{On the Usage of AI-Assisted Technologies in the Writing Process}
During the preparation of this manuscript, the authors used ChatGPT (version 4o), a language model developed by OpenAI, to revise grammar and style in portions of the text.  
This tool was used on July of 2025, and all AI-assisted content was subsequently reviewed and edited by the authors. 
The authors accept full responsibility for the content and conclusions presented in this publication.

\bibliography{references}

\end{document}